\crefname{section}{Sec.}{Secs.}
\Crefname{section}{Section}{Sections}
\Crefname{table}{Table}{Tables}
\crefname{table}{Tab.}{Tabs.}
\newtheorem{theorem}{Theorem}
\newtheorem{lemma}{Lemma}
\newcommand{\bones}{\texttt{BONES}\xspace}
\newcommand{\bola}{BOLA\xspace}
  \providecommand\BibTeX{{%
    \normalfont B\kern-0.5em{\scshape i\kern-0.25em b}\kern-0.8em\TeX}}}
\begin{document}

\title{BONES: Near-Optimal Neural-Enhanced Video Streaming}


\author{Lingdong Wang}
\email{lingdongwang@umass.edu}
\orcid{0009-0007-4672-2243}
\affiliation{
  \institution{University of Massachusetts Amherst}
  \city{Amherst}
  \state{Massachusetts}
  \country{USA}
}

\author{Simran Singh}
\email{s.singh.xzy@gmail.com}
\orcid{0000-0002-0867-2528}
\affiliation{
  \institution{New Jersey Institute of Technology}
  \city{Newark}
  \state{New Jersey}
  \country{USA}
}

\author{Jacob Chakareski}
\email{jacobcha@njit.edu}
\orcid{0000-0003-2428-9518}
\affiliation{
  \institution{New Jersey Institute of Technology}
  \city{Newark}
  \state{New Jersey}
  \country{USA}
}

\author{Mohammad Hajiesmaili}
\email{hajiesmaili@cs.umass.edu}
\orcid{0000-0001-9278-2254}
\affiliation{
  \institution{University of Massachusetts Amherst}
  \city{Amherst}
  \state{Massachusetts}
  \country{USA}
}

\author{Ramesh K. Sitaraman}
\email{ramesh@cs.umass.edu}
\orcid{0000-0003-0558-6875}
\affiliation{
  \institution{University of Massachusetts Amherst}
  \city{Amherst}
  \state{Massachusetts}
  \country{USA}
}


\begin{abstract}
Accessing high-quality video content can be challenging due to insufficient and unstable network bandwidth. Recent advances in neural enhancement have shown promising results in improving the quality of degraded videos through deep learning. Neural-Enhanced Streaming (NES) incorporates this new approach into video streaming, allowing users to download low-quality video segments and then enhance them to obtain high-quality content without violating the playback of the video stream. We introduce BONES, an NES control algorithm that jointly manages the network and computational resources to maximize the quality of experience (QoE) of the user. BONES formulates NES as a Lyapunov optimization problem and solves it in an online manner with near-optimal performance, making it the first NES algorithm to provide a theoretical performance guarantee. Comprehensive experimental results indicate that BONES increases QoE by 5\% to 20\% over state-of-the-art algorithms with minimal overhead. Our code is available at https://github.com/UMass-LIDS/bones.
\end{abstract}


\begin{CCSXML}
<ccs2012>
<concept>
<concept_id>10003033.10003068.10003073.10003074</concept_id>
<concept_desc>Networks~Network resources allocation</concept_desc>
<concept_significance>500</concept_significance>
</concept>
<concept>
<concept_id>10002951.10003227.10003251.10003255</concept_id>
<concept_desc>Information systems~Multimedia streaming</concept_desc>
<concept_significance>500</concept_significance>
</concept>
</ccs2012>
\end{CCSXML}

\ccsdesc[500]{Networks~Network resources allocation}
\ccsdesc[500]{Information systems~Multimedia streaming}

\keywords{adaptive bitrate streaming, Lyapunov optimization, neural enhancement, super-resolution}

\received{August 2023}
\received[revised]{January 2024}
\received[accepted]{March 2024}


\maketitle

\section{Introduction}

Video content dominates the Internet, accounting for more than 65\% of its traffic volume \cite{internet}. However, accessing high-quality video content is often hindered by insufficient and unstable network bandwidth between the video server and video player (i.e., client). This challenge of high-quality video streaming is even more significant when delivering higher-resolution or immersive videos, such as 4K/8K videos, 360-degree videos, and volumetric videos.

The traditional approach to maximizing the user's quality of experience (QoE) is to use an adaptive bitrate (ABR) algorithm. An ABR algorithm typically runs within the video player and ensures that the video plays back continuously (i.e., without rebuffering) at the highest possible quality (i.e., bitrate). To achieve this goal, the ABR algorithm downloads lower-quality video segments when the available network bandwidth is low to prevent rebuffering and downloads higher-quality segments when the bandwidth is high. There has been extensive research on developing ABR algorithms for decades, with many known algorithms such as \bola, Dynamic, Festive, MPC, Pensieve, etc~\cite{bola, bola2, buffer, festive, mpc, pensieve}.

\textbf{Neural-Enhanced Streaming (NES).} Traditional video streaming relies on transmitting videos from the server to the client at the highest possible quality while ensuring continuous playback~\cite{ChakareskiC:06,ThomosCF:10, ChakareskiAWTG:04c}. However,
recent advances in machine learning open up a new possibility of transmitting videos from the server to the client at low quality, and then {\em neurally enhancing} the video quality via deep-learning techniques at the client. Some examples of such neural enhancement include super-resolution \cite{mdsr, swinir}, frame interpolation\cite{spacetime, unconstrain}, video inpainting \cite{lama, videoinpaint}, video denoising \cite{videnn, fastdvdnet}, point cloud upsampling \cite{punet, pugan}, and point cloud completion \cite{pfnet}. NES methods incorporate this new approach into video streaming, enabling a tradeoff between communication resources for transmitting high-quality videos and computational resources for neural enhancement. Unlike ABR algorithms that only decide the quality of the video segment to download, an NES control algorithm decides on {\em both} the download quality and the enhancement option, for each video segment. NES is particularly advantageous in improving the worst-case user experience under poor network conditions. Additionally, it allows video providers or clients to save bandwidth by lowering the transmission bitrate and performing enhancement afterward.

\textbf{Prior work on NES.} While not as extensively studied as ABR algorithms, recent works on NES use reinforcement learning (RL) \cite{nas, improve, sr360, sophon, srcache} or heuristic approaches \cite{presr, yuzu}. However, these methods do not have theoretical guarantees for their performance. Further, prior works only consider one enhancement option during inference, usually the one bringing the highest quality gain in real-time. This restricts the possible design space of enhancement options and disregards the broader benefits of utilizing diverse enhancements over a longer time horizon. Finally, existing NES methods are complex to deploy and slower to converge. For example, RL-based methods have high training costs but may still not adapt well to real-world scenarios \cite{situ}. Other approaches that rely on model predictive control (MPC) compute a large rigid decision table or heuristically solve an NP-hard problem, leading to an intractable solution \cite{mpc, presr}.

\textbf{Our NES algorithm.} To rectify the above shortcomings, we propose \textbf{B}uffer-\textbf{O}ccupancy-based \textbf{N}eural-\textbf{E}nhanced \textbf{S}treaming (\bones), a client-side NES algorithm for on-demand video streaming. Specifically, \bones downloads video segments from a video provider's server to a client device and then enhances these segments opportunistically using local computational resources. To ensure efficient scheduling of the available bandwidth and computational resources, \bones operates within a novel parallel-buffer system model and solves a Lyapunov optimization problem online. It has a provable near-optimal performance and exploits all available enhancement methods during inference, resulting in superior performance. Besides, \bones has a simple control algorithm with explainable parameters, making it easier to deploy in production systems. 

\textbf{Our Contributions.} We make the following contributions in our work:
\vspace{-0.1cm}
\begin{enumerate}
\item \textit{Joint optimization of download and enhancement decisions.} \bones generalizes the Lyapunov optimization approach of the ABR algorithm \bola \cite{bola, bola2} to the NES problem of scheduling both bandwidth and computational resources. Our proposed parallel-buffer system model and online control algorithm allow for optimizing the download and enhancement decisions jointly while conducting the actual respective operations asynchronously, which largely contributes to the performance enhancements enabled by our framework.

\item \textit{Theoretical guarantees and simple algorithm design.} \bones is the first NES algorithm with a provable guarantee on its performance. Specifically, \bones achieves QoE that is provably within an additive factor of the offline optimal solution. Besides enhanced performance, \bones has the advantages of linear time complexity and simplicity in deployment.

\item \textit{Simulation environment and prototype system implementation.} We implement an efficient simulation environment for large-scale evaluation of NES algorithms, along with a prototype system for real-world examination. Our code is publicly released to advance research in related fields.

\item \textit{Superior experimental performance with low overhead.} Using extensive experiments, we compare  \bones with existing methods under six enhancement settings and four network trace datasets. Our experimental results demonstrate that \bones increase QoE by $3.56\%$ to $13.20\%$ in our simulation evaluation and $4.66\%$ to $20.43\%$ in our prototype evaluation. In comparison to the default ABR algorithm of the \emph{dash.js} video player \cite{dashjs}, \bones can improve its QoE by $7.33\%$, which is equivalent to increasing the average visual quality by 5.22 in the VMAF score. \bones only incurs minimal costs and offers three trade-off options between performance and overhead. The options range from less QoE improvement with zero overhead to maximum benefit with an additional 310-KB download size and 0.68-second startup latency.

\end{enumerate}

\section{Problem Formulation}
We now propose a system model and pose the optimization solved by \bones. The main notations used in this paper are summarized in \cref{tab:notation}.

\begin{table}[t]
\centering
\caption{Summary of main notations.}
\label{tab:notation}
\begin{tabular}{|c|l|}
\hline
Notation & Description \\
\hline
$t_k$ & the $k$-th time slot\\
$K_n$ & index of the time slot where the $n$-th segment is downloaded \\
$T_k$ & duration of time slot $t_k$\\
$d_i$ & binary indicator to select the $i$-th download bitrate \\
$e_j$ & binary indicator to select the $j$-th enhancement method \\
$Q^d(t_k)$ & download buffer level at time slot $t_k$ \\
$Q^e(t_k)$ & enhancement buffer level at time slot $t_k$ \\
$u^d(i,t_k)$  & download utility of the $i$-th bitrate in time slot $t_k$ \\
$u^e(i, j, t_k)$  & enhancement utility of the $j$-th method for the $i$-th bitrate in time slot $t_k$ \\
$\Tilde{u}^e(i, j, t_k)$ & timely enhancement utility of the $j$-th method for the $i$-th bitrate in time slot $t_k$ \\

$p$ & duration of a video segment \\
$B_i(t_k)$ & the $i$-th download bitrate in time slot $t_k$\\
$S_i(t_k)$ & size of the video segment with the $i$-th bitrate in time slot $t_k$\\
$\omega(t_k)$ & average bandwidth in time slot $t_k$\\
$t^e(i,j)$  & processing time to enhance a segment with the $i$-th bitrate using the $j$-th method \\
$T_{\text{end}}$ & playback finishing time \\
$V$  & parameter to control the trade-off between Lyapunov drift and penalty \\
$Q^d_{\max}$ & maximum download buffer capacity \\
$u_{\max}$ & maximum utility of a video segment \\
$T_{\min}$ & minimum time slot duration \\
$T_{\max}$ & maximum time slot duration \\
$u_n$ & total utility of the $n$-th video segment \\
$\phi_n$ & rebuffering time to download the $n$-th video segment \\
$\gamma$ & hyper-parameter to control the trade-off between utility and smoothness \\
$\beta$ & hyper-parameter to linearly control $V$ \\

\hline
\end{tabular}
\end{table}

\subsection{System Model}
\label{sec:system}

Modern video streaming works by temporally partitioning the video into {\em segments,}
where each segment plays for a fixed amount of time (say, 4 seconds). The video player sequentially downloads each segment from a video server and renders the segment on the viewer's device. 
To reduce the chance of rebuffering (i.e., freezing), each segment is downloaded and stored in a {\em download buffer} ahead of when it needs to be rendered. Each segment is encoded in multiple qualities. And an ABR algorithm chooses the quality to download in an online fashion with the goal of reducing rebuffering events and optimizing the QoE. As noted earlier, an NES algorithm also performs neural enhancements for downloaded segments prior to their rendering.

We show the system model of a traditional ABR algorithm within the video player and contrast that with the system model of \bones in \cref{fig:system}. Relative to a traditional ABR system that solely schedules bandwidth resources with one download buffer, our NES system also incorporates an extra buffer and control flow to manage computational resources. In the \bones system model, the \emph{download buffer} stores video segments that have been downloaded, and the \emph{enhancement buffer} stores enhancement tasks waiting for computational resources. We first introduce the download process in the lower branch and then the enhancement process in the upper branch.

\begin{figure}[t]
    \centering
    \includegraphics[width=\linewidth]{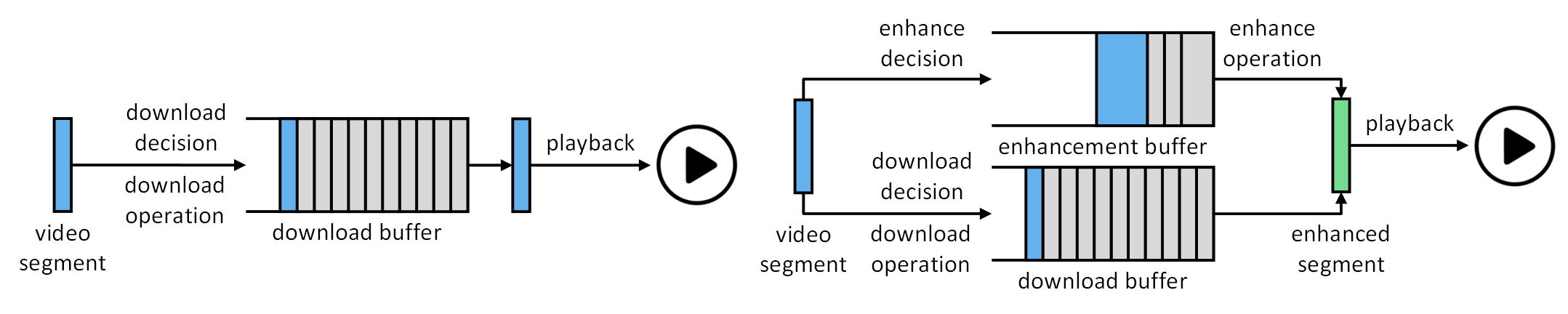}
    \caption{System models of the traditional ABR algorithm (left) and \bones (right). }
    \label{fig:system}
\end{figure}

Suppose the video segment is indexed by $n \in [N]$, and the time slot is indexed by $k \in [K_N]$, where $K_N$ represents the index of the slot where the $N$-th (last) segment of the video content is downloaded. At the beginning of each time slot $t_k$, \bones makes both download and enhancement decisions for the next video segment.

The download decision determines whether to download the next segment in this time slot and at which bitrate to download it. Formally, we represent the possible choices for this decision variable using a vector $\mathbf{d} = (d_1, \cdots, d_I)$, where the indicator $d_i \in \{0, 1\}$, $\sum_{i=1}^I d_i \leq 1$, where $I$ denotes the number of possible download bitrate options. In particular, $\sum_i d_i = 0$ indicates no download, while $d_{i} = 1$ indicates that the next segment will be downloaded at the $i$-th available bitrate $B_i(t_k)$ at time slot $t_k$. A high download bitrate improves the video quality but consumes more network bandwidth, leading to a longer download time.

If the decision is to download, the segment will be retrieved at the desired bitrate and then added to the end of the download buffer, as illustrated in \cref{fig:system}. The next time slot will commence immediately after the completion of the download. If there is no download to be performed in the present time slot, \bones waits for $ \Delta $ milliseconds and starts the next slot. The video content is played back at a constant rate from the download buffer. However, if the download buffer is empty, the playback will freeze until new content is available, which results in rebuffering and negatively impacts the quality of experience (QoE) of the user.

Let  $p$ be the time duration for a video segment to be played back, expressed in milliseconds. Assume that the $i$-th download option has been selected, i.e., ${d_i=1}$. Then, the segment's size in bits can be expressed as ${S_{i}(t_k) = B_i(t_k) p}$. Next, let $\omega(t_k)$ denote the average network bandwidth during time slot $t_k$ in Kbps. We can then compute the duration of time slot $t_k$, i.e., the download time of the segment as $T_k = S_i(t_k) / \omega(t_k)$ milliseconds. Note that $T_k$ also captures the length of downloaded content consumed by the client due to video playback during time slot $t_k$. Finally, let $Q^d(t_k)$ denote the download buffer level, i.e., the total length of segments remaining in the buffer. We can formulate the download buffer dynamic as follows: 
\begin{equation}
\label{eq:Qd}
    Q^d(t_{k+1}) = \max \left[Q^d(t_k) - T_k, 0 \right] + \sum_i d_i p.
\end{equation}
At the start of time slot$t_k$, we also decide whether to enhance the video segment and which enhancement method to apply. The enhancement decision is defined as a vector $\mathbf{e} = (e_1, \cdots, e_J)$, where the indicator $e_j \in \{0, 1\}$. But slightly different from the download decision, we have $\sum_{j=1}^J e_j = 1$, where $J$ denotes the number of possible enhancement methods. We set the first enhancement method ($e_1$) as "no enhancement" with zero finishing time and zero effect and $e_{j} = 1$ indicates that the $j$-th available enhancement method will be applied to the video segment. An enhancement method using a larger deep-learning model typically offers greater quality improvement but takes more time to complete due to higher computational complexity.

A video segment will be pushed into the enhancement buffer at the same time as it is pushed into the download buffer. When doing so, we are actually registering a new computation task to enhance this segment with the chosen enhancement method, and letting this task wait for resources in the enhancement buffer. The length of the video segment in the enhancement buffer is set to be the expected time required to finish its enhancement, denoted by $t^e(i, j)$. This computation time $t^e(i, j)$ is a function of the download option $i$ and the enhancement option $j$, as both the downloaded video quality and enhancement method can affect the computation speed. The estimated computation time is derived by profiling the enhancement method on the client device. Since the playback duration of a segment is usually different from the processing time of its enhancement task, a video segment often has different lengths in the two buffers, as illustrated in \cref{fig:system}.

It is important to note that, though the enhancement decision is made before a segment enters the enhancement buffer, the actual computation for the enhancement takes place when the video segment leaves the buffer. This enables \bones to optimize download and enhancement decisions jointly but perform the actual operations asynchronously in a non-blocking manner, which leads to superior performance. This approach also allows us to use the Lyapunov optimization framework \cite{renewal} which can synchronously control multiple queues in a renewal process.

The departure of a segment from the enhancement buffer indicates the completion of its enhancement task. The enhancement buffer has the same departure rate as the download buffer, because in 1 second, we will playback 1-second video content and complete a 1-second computational task. Every video segment must go through both download and enhancement buffers to become an enhanced segment before playback. But the enhanced segment could be the same as the original if "no enhancement" is selected.
Let the enhancement buffer level $Q^e (t_k)$ capture the aggregate length of computational tasks waiting in the buffer measured in milliseconds. Now, we can formulate the temporal evolution of the enhancement buffer as:
\begin{equation}
\label{eq:Qe}
    Q^e(t_{k+1}) = \max \left[Q^e(t_k) - T_k, 0 \right] + \sum_{ij} d_i e_j t^e(i, j).
\end{equation}

\subsection{Optimization Objective}
\label{sec:objective}

There are three main goals in video streaming - improve video quality, reduce the rebuffering ratio, and avoid buffer overflow. Similarly to \bola \cite{bola, bola2}, we incorporate these three goals into a Lyapunov optimization problem. 
To improve the video quality, we aim to maximize a time-average utility function defined as
\begin{equation}
\label{eq:utility}
\begin{split}
     \Bar{u} & \triangleq \lim_{N \to \infty} \frac{\mathbb{E} \left\{ \sum_{kij} d_i u^d(i, t_k) + d_i e_j \Tilde{u}^e(i, j, t_k) \right\}}{\mathbb{E} \left\{ T_{\text{end}} \right\}} \\
    &= \frac{ \lim_{K_N \to \infty} \frac{1}{K_N} \mathbb{E} \left\{ \sum_{kij} d_i u^d(i, t_k) + d_i e_j \Tilde{u}^e(i, j, t_k) \right\}}{ \lim_{K_N \to \infty} \frac{1}{K_N} \mathbb{E} \left\{ \sum_{k=1}^{K_N} T_k \right\}},
\end{split}
\end{equation}
where $u^d(i, t_k)$ is the base utility of a video segment, $\Tilde{u}^e(i, j, t_k)$ is the extra utility obtained by neural enhancement, and $T_{\text{end}}$ is the playback finishing time. While the proposed algorithm works for any size of the video sequences, in our theoretical analysis, we further assume there are infinite video segments, i.e., $N \to \infty$. 
Because the gap between the total playback time and total download time is the time to drain all the content out of the download buffer, we have $\mathbb{E} \left\{ T_{\text{end}} \right\} - \mathbb{E} \left\{ \sum_{k=1}^{K_N} T_k \right\} \leq Q^d_{\max}$, where $Q^d_{\max}$ is the maximum download buffer capacity. Since $Q^d_{\max}$ is finite, we further have $\lim_{K_N \to \infty} \frac{\mathbb{E} \{ T_{\text{end}} \}}{\mathbb{E} \left\{ \sum_{k=1}^{K_N} T_k \right\}} = 1$. Based on this equation and the theory of renewal processes \cite{discrete}, we derive the last formula in \cref{eq:utility}. The final utility function is the expected sum of the download and enhancement utility in each time slot, averaged by the expected time slot duration.

Different from existing myopic NES methods that only consider the present time slot, we allow for selecting non-real-time enhancements and buffering of computation tasks over a longer time horizon. However, we must make sure that the enhancement of a segment is finished before its playback. 
Utilizing the property that the two buffers have the same departure rate, we know that an enhancement requiring $t^e(i, j)$ to finish will miss the playback deadline if $Q^e(t_k) + \sum_{ij} d_i e_j t^e(i, j) > Q^d (t_k)$. So we abandon such options by assigning them a negative infinity utility. The final enhancement utility function is expressed in \cref{eq:ue}, where $u^e(i, j)$ represents an enhancement method's utility improvement pre-measured by the video provider and transmitted to the client via metadata. Specially, we have $t^e(i, 1) = 0, u^e(i, 1, t_k) = 0, ~\forall i, k$ for the "no enhancement" option.
\begin{equation}
\label{eq:ue}
\Tilde{u}^e(i, j, t_k) =
\left\{
    \begin{aligned}
    &- \infty, ~Q^e(t_k) + \sum_{ij} d_i e_j t^e(i, j) > Q^d (t_k), \\
    &u^e(i, j),  \text{~otherwise}.
    \end{aligned}
    \right.
\end{equation}

In order to reduce the rebuffering ratio, we aim to maximize the time-average playback smoothness function defined in \cref{eq:smooth}. In particular, by maximizing the ratio of the total video length $\sum_{kij} d_i e_j p$ and the total playback time $T_{\text{end}}$, we are minimizing their difference, i.e., the total rebuffering time.

\begin{equation}
\label{eq:smooth}
\begin{split}
    \Bar{s} &\triangleq \lim_{N \to \infty} \frac{\mathbb{E} \left \{\sum_{kij} d_i e_j p \right\}}{\mathbb{E} \{ T_{\text{end}} \}} \\
    &= \frac{ \lim_{K_N \to \infty} \frac{1}{K_N} \mathbb{E} \left\{ \sum_{kij} d_i e_j p \right\}}{ \lim_{K_N \to \infty} \frac{1}{K_N} \mathbb{E} \left\{ \sum_{k=1}^{K_N} T_k \right\}}.
\end{split}
\end{equation}

To prevent buffer overflow, we require both buffers to be rate stable, which is a relaxation of the strict buffer constraint. Rate stability ensures that the expected input rate is not greater than the expected output rate. We then establish the download buffer constraint as 
\begin{equation}
\label{eq:rated}
    \lim_{K_N \to \infty} \frac{1}{K_N} \mathbb{E} \left\{ \sum_{ki} d_i p \right\} \leq \lim_{K_N \to \infty} \frac{1}{K_N} \mathbb{E} \left\{ \sum_{k} T_k \right\},
\end{equation}
and the enhancement buffer constraint as
\begin{equation}
\label{eq:ratee}
    \lim_{K_N \to \infty} \frac{1}{K_N} \mathbb{E} \left\{ \sum_{k} d_i e_j t^e(i, j) \right\} \leq \lim_{K_N \to \infty} \frac{1}{K_N} \mathbb{E} \left\{ \sum_{k} T_k \right\}.
\end{equation}

Finally, we  formulate the neural enhancement problem as 
\begin{equation}
    \max_{\mathbf{d}, \mathbf{e}} ~~\Bar{u} + \gamma \Bar{s},
    \quad 
    \text{s.t.,} \quad \text{Constraints}~\eqref{eq:rated}, ~\eqref{eq:ratee},
\label{eq:opt}
\end{equation}
where the goal is to maximize the streaming session's utility and smoothness under buffer constraints. The hyper-parameter $\gamma$ controls the trade-off between the smoothness and utility objectives. Note that our optimization target can be reduced to that of \bola by setting $e_1 = 1$ (always choosing "no enhancement"). This implies that our method can function as a neural enhancement streaming algorithm when enhancement is available or a conventional ABR algorithm, otherwise.

\section{\bones: Control Algorithm and Theoretical Analysis}

\subsection{Control Algorithm}
\label{sec:control}

\begin{algorithm}[t]
\DontPrintSemicolon
\KwIn{computation time matrix $T^e \in \mathbb{R}_+^{I \times J}$, total utility matrix $U \in \mathbb{R}^{N \times I \times J}$, segment size matrix $S \in \mathbb{R}_+^{N \times I}$
, maximum download buffer capacity $Q^d_{\max}$, maximum utility $u_{\max}$, segment duration $p$, hyper-parameters $\gamma$ and $\beta$}
Initialize $O \in \mathbb{R}^{I \times J}, \quad V = \beta \nicefrac{((Q^d_{\max} - p) p)}{(u_{\max} + \gamma p)}$ \\
\For{$n \in [1, N]$}{
    $t \leftarrow$ \text{current timestamp} \\
    \For {$i \in [1, I]$}{
        \For {$j \in [1, J]$}{
            $O[i, j] \leftarrow (Q^d(t) p + Q^e(t) T^e[i, j] - V (U[n, i, j] + \gamma p)) / S[n, i]$ \\
            \If{$Q^e(t) + T^e[i, j] > Q^d(t)$}{
                $O[i, j] \leftarrow -\infty$ \\
            }
        }
    }
    $i', j' \leftarrow \arg\min_{i,j} O$ \\
    download the $n$-th segment with the $i'$-th bitrate, wait until the download finishes at $t'$\\
    \If{$Q^e(t') + T^e[i', j'] > Q^d(t')$}{
        $j' \leftarrow 0$ \\
    }
    push the $n$-th segment into the enhancement buffer to enhance it with the $j'$-th method  \\
    push the $n$-th segment into the download buffer \\
    $\Delta t_{\text{sleep}} \leftarrow \max[Q^d(t') + p - Q^d_{\max}, 0] $ \\
    wait for time $\Delta t_{\text{sleep}}$ \\
 }
 \caption{\bones: A joint control algorithm for download and enhancement decisions}
 \label{alg:bones}
\end{algorithm}

 We develop an efficient online algorithm called \bones to compute the download and enhancement decisions. Solving  problem~\eqref{eq:opt} optimally in an online fashion is not practically feasible since the future values of the network bandwidth are uncertain. However, we show that \bones is within an additive factor of the offline optimal solution to the problem~\eqref{eq:opt}.
Inspired by the Lyapunov optimization framework for renewal frames \cite{renewal, book} and \bola, \bones greedily minimizes the time-average drift-plus-penalty for each time slot. Specifically, the objective function involves the Lyapunov drift, defined as
\begin{equation}
\label{eq:od}
     O_D(t_k) = Q^d(t_k) \sum_i d_i p + Q^e(t_k) \sum_{ij} d_i e_j t^e(i, j),\\
\end{equation}
and a penalty function defined as
\begin{equation}
\label{eq:op}
     O_P(t_k) = -\left( \sum_{ij} d_i u^d(i, t_k) + d_i e_j \Tilde{u}^e(i, j, t_k) + \gamma d_i p \right).
\end{equation}
We also need to divide our objective by the time slot duration $T_k=S_i(t_k) / \omega(t_k)$, but the random variable $\omega(t_k)$ can be omitted in the optimization. 


Altogether, the optimization problem that \bones aims to solve in each time slot is given below:
\begin{align}
\label{eq:alg}
    \min_{\mathbf{d}, \mathbf{e}} &~ \frac{O_D(i, t_k) + V O_P(i, j, t_k)}{\sum_i d_i S_i(t_k)} \\
    \text{s.t.,} &~ d_i \in \{0, 1\} ~\forall i, ~\sum d_i \leq 1,  \nonumber \\
    &~ e_j \in \{0, 1\} ~\forall j, ~\sum e_j = 1, \nonumber
\end{align}
where $V$ is a trade-off factor between the Lyapunov drift and the penalty. Note that \bones relies solely on the download and enhancement buffer levels $Q^d(t_k)$ and $Q^e(t_k)$ in its operation without using any information about the available network bandwidth. Thus, \bones is a ``buffer-occupancy-based'' method.

We present the basic control algorithm of \bones in \cref{alg:bones}. The algorithm receives a matrix of computation time $T^e \in \mathbb{R}_+^{I \times J}$ measured locally, a matrix of total utility $U \in \mathbb{R}^{N \times I \times J}$ from metadata, a matrix of segment size $S \in \mathbb{R}_+^{N \times I}$ from metadata, and some constant numbers and hyper-parameters as indicated in the input argument of \cref{alg:bones}. The entries of the total utility matrix include the aggregate values of the download utility $u^d_n$ and the enhancement utility $u^e_n$ for each segment and enhancement option, i.e., $U[n, i, j] = u^d_n(i) + u^e_n(i, j)$. Note that \bones can also run with average utility and segment size if per-segment data is unavailable.

Concretely, the \bones algorithm operates as follows. The algorithm firstly computes $V$ according to \cref{eq:valv} (Line 2 in \cref{alg:bones}). Then for each video segment, \bones solves \cref{eq:alg} by traversing all possible combinations of download and enhancement options and choosing the one with the minimum objective score (Lines 3 - 13). \bones downloads the video segment at the bitrate given by the solution and waits for the completion of the download. Once the segment is downloaded, \bones will check whether the previously-decided enhancement option is still applicable and pushes a computation task to the enhancement buffer only if it can be finished on time (Lines 15 - 18). \bones also pushes the video segment into the download buffer. Further, \bones will pause downloading until the download buffer can hold one more segment (Lines 20 - 21). Note that the stopping criterion here is implemented differently from the theory. This is because we can know exactly when \bones should stop downloading and how long it should sleep, by assigning $V$ a special upper bound. The impact of different values of $V$ on the performance of \bones will be discussed in \cref{sec:proof}.

\bones has a time complexity of $O(IJ)$ in each iteration, where $I$ and $J$ are the numbers of possible choices for bitrate and enhancement options. Besides, the computation of objective scores can be fully parallelized. As a result, \bones runs efficiently in real-time and is easy to deploy. Based on the primary control algorithm here, we propose two additional heuristics in \cref{sec:ablation} to further improve the practical performance of \bones.

\subsection{Decision Plane of \bones}
\label{sec:plane}
To better understand our algorithm, we depict its 2D decision plane with respect to buffer levels under a hypothetical scenario involving two types of low-resolution (LR) segments and high-resolution (HR) segments. And there is only one type of enhancement method for LR segments. We present four variants of \bones decision planes in \cref{fig:plane} under different parameter settings. 

\textit{Interpreting the decision plane of \bones.} Starting from the lower left corner of a decision plane, the download buffer is empty and the viewer is in urgent need of content. In this case, \bones selects to quickly download LR segments without initiating enhancements. As the download buffer level increases, \bones still downloads LR segments but has time to enhance them and achieve higher visual quality. Once the download buffer level is high enough, \bones can pursue the highest quality by downloading HR segments since enhancement can never be perfect (enabling less quality than HR). When the download buffer is almost full, \bones suspends further downloading to prevent buffer overflow. According to \cref{eq:ue}, \bones will stop enhancing segments if the enhancement buffer level is close to the download buffer level, so it will never enter the upper left white triangle. If enhancements are unavailable on the device, the orange region vanishes and the control plane becomes a 1D function of the download buffer level, which implies \bones is reduced to \bola. To conclude, this simple scenario demonstrates that \bones consistently exhibits reasonable behavior.

\textit{The impact of system parameters.}
\cref{fig:plane} also illustrates the impact of system settings and hyper-parameter settings on the decisions of \bones. Compared with the ``Basic'' setting, we increase the computation speed by  $2\times$ in ``Faster Computation'' as if \bones runs on more powerful computational hardware. In this case, the area of downloading and enhancing LR segments increases, meaning that neural enhancement becomes more desirable with more computational resources available. 

\begin{figure}[!t]
    \centering
    \includegraphics[width=\linewidth]{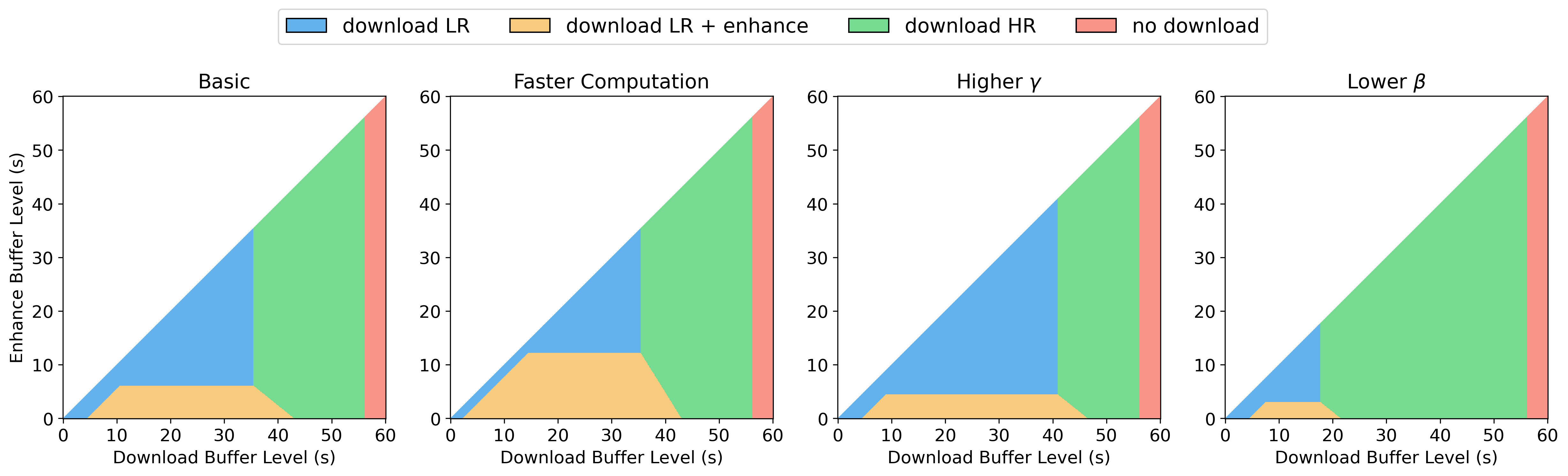}
    \caption{The decision plane of \bones. Parameter settings from left to right are as follows. ``Basic'': $\gamma p = 10, \beta = 1$, $1\times$ computation speed. ``Faster Computation'': $2\times$ computation speed. ``Higher $\gamma$'': $\gamma p = 50$. ``Lower $\beta$'': $\beta = 0.5$.}
    \label{fig:plane}
\end{figure}

Recall that \bones has 2 hyper-parameters --- $\gamma$ and $\beta$. Parameter $\gamma$ trades off between visual quality and playback smoothness (rebuffering ratio) in \cref{eq:opt}. In practice, we tune $\gamma$ together with the constant segment duration $p$. Parameter $\beta$ linearly controls $V$, the trade-off factor between Lyapunov drift and penalty in \bones optimization objective \cref{eq:alg}. In the ``Higher $\gamma$'' setting, we increase $\gamma p$ from 10 to 50, making \bones prefers a lower rebuffering ratio than higher visual quality. As a result, \bones increases the area of downloading LR segments and requests download bitrate in a more conservative manner. 
In the ``Lower $\beta$'' setting, we decrease $\beta$ from 1 to 0.5, suppressing the penalty term in the optimization objective \cref{eq:alg}. Due to the diminishing return effect of visual quality, each bit in an LR video segment provides more visual quality scores than that of an HR segment. Therefore, the penalty term in \cref{eq:alg} prefers downloading LR segments and performing enhancements (see the structure of utility divided by segment size). On the opposite, the Lyapunov drift term prefers HR segments. This explains why \bones will download more HR segments as a consequence of lower $\beta$ and $V$ in this setting. We also provide empirical evidence of how \bones is affected by its hyper-parameters in \cref{sec:hyper}.

\subsection{Performance Bound}
\label{sec:proof}
In this subsection, we rigorously analyze the performance of \bones and establish near-optimal guarantees. The analysis shares the same high-level logical flow with \bola, but since \bones introduces the addition of an enhancement buffer, there are important differences in the details of the analysis. \cref{the:buff} bounds the size of the download buffer and shows that \bones does not violate the buffer capacity. Secondly, \cref{the:perf} bounds the performance of \bones with respect to that of the offline optimal solution.

\begin{theorem}
\label{the:buff}
Assume $Q^d(0)=0, Q^e(0)=0$, and
 $0 < V \leq \frac{(Q^d_{\max} - p) p}{u_{\max} + \gamma p}$
 , where $u_{\max}$ denotes the maximum utility. 
Then, the following holds:
$Q^d(t_k) \leq V \frac{u_{\max} + \gamma p}{p} + p$, 
and $Q^d(t_k) \leq Q^d_{\max}$.
\end{theorem}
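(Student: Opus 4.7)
The plan is to prove the download buffer bound by induction on the time slot index $k$. The base case $k=0$ is immediate since $Q^d(0) = 0$ and the claimed upper bound $V(u_{\max}+\gamma p)/p + p$ is non-negative. For the inductive step, I would split into two subcases based on where $Q^d(t_k)$ sits relative to the threshold $V(u_{\max}+\gamma p)/p$.

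In the first subcase, where $Q^d(t_k) \leq V(u_{\max}+\gamma p)/p$, the buffer dynamic in \cref{eq:Qd} immediately gives $Q^d(t_{k+1}) \leq Q^d(t_k) + p \leq V(u_{\max}+\gamma p)/p + p$ regardless of whether a download occurs, since at most one segment of length $p$ is appended per slot and the first term $\max[Q^d(t_k)-T_k,0]$ never exceeds $Q^d(t_k)$.

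In the second subcase, where $Q^d(t_k) > V(u_{\max}+\gamma p)/p$, the goal is to show that \bones never chooses to download. Using the objective from \cref{eq:alg}, the drift-plus-penalty numerator for any download option $d_i = 1$ paired with any enhancement choice $e_j = 1$ equals
\[
Q^d(t_k)\, p + Q^e(t_k)\, t^e(i,j) - V\bigl(u^d(i,t_k) + u^e(i,j,t_k) + \gamma p\bigr).
\]
Since $Q^e(t_k) \geq 0$, $t^e(i,j) \geq 0$, and any feasible enhancement keeps the combined per-segment utility bounded by $u_{\max}$, this expression is at least $Q^d(t_k)\, p - V(u_{\max} + \gamma p)$, which is strictly positive in this subcase. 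The ``no-download'' alternative ($d_i = 0$ for all $i$) has drift-plus-penalty value $0$, so \bones must prefer not to download. Consequently $Q^d(t_{k+1}) \leq Q^d(t_k) \leq V(u_{\max}+\gamma p)/p + p$, closing the induction.

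The second claim $Q^d(t_k) \leq Q^d_{\max}$ then follows directly by substituting the assumed upper bound $V \leq (Q^d_{\max} - p)p / (u_{\max} + \gamma p)$ into the first bound. The main subtlety I anticipate is handling the enhancement-buffer contribution $Q^e(t_k)\, t^e(i,j)$ in the drift, which is new compared to the \bola analysis; fortunately this term is non-negative and only strengthens the argument that any download option becomes suboptimal once $Q^d(t_k)$ grows large, so the enhancement buffer state cannot overturn the decision against downloading. A secondary technical point is the treatment of the ``no-download'' option, which formally sidesteps the denominator $\sum_i d_i S_i(t_k) = 0$ in \cref{eq:alg}: the standard renewal-frame Lyapunov framework resolves this by comparing drift-plus-penalty values directly, electing not to download exactly when every download option yields a positive value.
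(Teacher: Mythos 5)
Your proposal is correct and follows essentially the same route as the paper's own proof: induction on $k$, the same two-case split around the threshold $V(u_{\max}+\gamma p)/p$, and in the second case the same observation that the drift-plus-penalty of every download option is strictly positive (the nonnegative $Q^e(t_k)\,t^e(i,j)$ term only helping), so \bones elects not to download. Your explicit remarks on the no-download objective value and the bound $u^d+u^e\le u_{\max}$ only make explicit what the paper leaves implicit.
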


We present a proof in \cref{app:proof1}. \cref{the:buff} shows that the download buffer level has a finite bound of $O(V)$ and will never exceed the maximum capacity. The intuition here is to design a special upper bound for $V$, which ensures \bones to select "no download" if the download buffer level is higher than $Q^d_{\max} - p$. In this way,  \bones will download only when the download buffer can accept one more segment. Since the enhancement buffer is a virtual queue without any storage space, its occupancy is not a concern.

In order to prove the performance bound of \bones, we first show that there exists an optimal stationary i.i.d. algorithm independent of the buffer occupancy for problem \cref{eq:opt}, achieving the objective $\Bar{u}^* + \gamma \Bar{s}^*$. Based on that, we can derive the following theorem.

\begin{theorem}
\label{the:perf}
Assume $t^e \leq t^e_{\max}$, $T_{\min} \leq T_k \leq T_{\max}, ~\forall k$, and $t^e_{\max}, T_{\min}, T_{\max}$ are finite. Then, we have:
\begin{equation}
\label{eq:perf}
\Bar{u}' + \gamma \Bar{s}' \geq  \Bar{u}^* + \gamma \Bar{s}^* - \frac{p^2 + (t^e_{\max})^2 + 2 T_{\min} T_{\max} }{2 V T_{\min}},
\end{equation}
where $\Bar{u}' + \gamma \Bar{s}'$ is the objective score of \emph{\texttt{BONES}}.
\end{theorem}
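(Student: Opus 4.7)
The plan is to follow the standard drift-plus-penalty template for renewal-frame Lyapunov optimization, specialized to the two-buffer system of \bones. I would define the quadratic Lyapunov function $L(t_k) = \tfrac{1}{2}\bigl(Q^d(t_k)^2 + Q^e(t_k)^2\bigr)$ and the one-slot drift $\Delta(t_k) = L(t_{k+1}) - L(t_k)$. Applying the standard queue inequality $\max[Q-b,0]^2 \leq Q^2 - 2Qb + b^2$ to each of the buffer recursions \eqref{eq:Qd} and \eqref{eq:Qe}, and using the arrival bounds $\sum_i d_i p \leq p$ and $\sum_{ij} d_i e_j t^e(i,j) \leq t^e_{\max}$ together with $T_k \leq T_{\max}$, I would obtain a bound of the form
\begin{equation*}
\Delta(t_k) \;\leq\; \tfrac{1}{2}\bigl(p^2 + (t^e_{\max})^2 + 2 T_k^2\bigr) + Q^d(t_k)\bigl(\textstyle\sum_i d_i p - T_k\bigr) + Q^e(t_k)\bigl(\textstyle\sum_{ij} d_i e_j t^e(i,j) - T_k\bigr).
\end{equation*}

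Next I would add the scaled penalty term $-V\bigl(\sum_{ij}d_i u^d(i,t_k) + d_i e_j \widetilde{u}^e(i,j,t_k) + \gamma d_i p\bigr)$ to both sides, divide by $T_k$, and observe that the quantity on the right is exactly the ratio objective \eqref{eq:alg} that \bones greedily minimizes, up to the constant drift overhead and a term linear in $Q^d$ and $Q^e$ that does not depend on the decision variables. Since \bones picks the minimizer, its value of the ratio is upper bounded by the value attained by any feasible alternative, in particular by the optimal stationary randomized i.i.d.\ policy whose existence was asserted just before the theorem. Plugging that policy in, and using rate stability (Constraints~\eqref{eq:rated}–\eqref{eq:ratee}) to conclude that the $Q^d$ and $Q^e$ coefficients have nonpositive conditional expectation under it, collapses the right-hand side to $-V(\bar{u}^* + \gamma \bar{s}^*)$ plus the constant overhead.

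To convert the per-slot ratio inequality into the claimed time-average bound, I would take conditional expectation given the history, multiply through by $T_k$, sum over $k=1,\dots,K_N$, telescope the drift, and divide by $K_N$. Invoking the renewal-reward formulation of \cref{eq:utility} and \cref{eq:smooth} (together with $\mathbb{E}\{T_{\text{end}}\}/\mathbb{E}\{\sum_k T_k\} \to 1$ from the finiteness of $Q^d_{\max}$ established in \cref{the:buff}), the limits produce $\bar{u}' + \gamma \bar{s}'$ on one side and $\bar{u}^* + \gamma \bar{s}^*$ on the other. The residual constant is $\tfrac{1}{2}(p^2 + (t^e_{\max})^2 + 2 T_k^2)/T_k$ divided by $V$; bounding $T_k \geq T_{\min}$ in the first piece and $T_k^2/T_k \leq T_{\max}$ in the second, and combining over a common denominator $2 V T_{\min}$, yields exactly $\tfrac{p^2 + (t^e_{\max})^2 + 2 T_{\min} T_{\max}}{2 V T_{\min}}$.

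The main obstacle I anticipate is not the algebra of the drift bound but rather the renewal-frame bookkeeping: because $T_k$ is a random, decision-dependent slot length, the usual per-slot drift-plus-penalty argument must be replaced by its ratio version, and one has to justify interchanging the limit and expectation when passing from $\lim_{K_N\to\infty} \tfrac{1}{K_N}\mathbb{E}\{\cdot\}$ to a genuine time average via the renewal reward theorem. This hinges on the assumptions $T_{\min} \leq T_k \leq T_{\max}$ and $t^e \leq t^e_{\max}$ in the theorem statement, and on the buffer boundedness from \cref{the:buff}, both of which are designed precisely to keep the telescoped drift $\mathbb{E}\{L(t_{K_N+1}) - L(t_1)\}/K_N$ vanishing in the limit so that only the $O(1/V)$ penalty overhead survives.
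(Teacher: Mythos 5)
Your proposal is correct and follows essentially the same route as the paper's proof: the same quadratic Lyapunov function on the two buffers, the same drift bound with constant $\tfrac{1}{2}(p^2+(t^e_{\max})^2)$ plus a $T_{\max}$ cross term, the same comparison of the greedily minimized ratio objective against the optimal stationary i.i.d.\ policy from the preceding lemma (whose rate-stability properties kill the $Q^d$ and $Q^e$ coefficients), and the same telescoping/renewal-reward passage with $T_{\min}$ and $T_{\max}$ producing the stated $O(1/V)$ gap.
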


Proofs for the existence of the offline optimal and \cref{the:perf} are given in \cref{app:proof2}. While \bones can achieve an objective $\Bar{u}' + \gamma \Bar{s}'$ for \cref{eq:opt},  the optimal algorithm can achieve $\Bar{u}^* + \gamma \Bar{s}^*$. \cref{the:perf} shows that the performance gap of \bones toward this optimal algorithm is finitely bounded by $O(1/V)$. In other words, \bones can achieve near-optimal performance.

Results in the above theorems show that the performance of \bones depends on the value of parameter $V$. In the implementation of \bones, we use a hyper-parameter $\beta \in [0, 1]$ to linearly control $V$ as follows.
\begin{equation}
\label{eq:valv}
    V = \beta \frac{(Q^d_{\max} - p) p}{u_{\max} + \gamma p}.
\end{equation}
Putting together the results in Theorems~\ref{the:buff} and~\ref{the:perf}, one can observe that \bones has a $O(V, 1/V)$ trade-off between the download buffer level and the performance gap toward the optimal algorithm. It means one can increase $V$ to improve the performance of \bones. Empirical results in \cref{sec:hyper} verify this theoretical observation, showing that the QoE of \bones increases with $V$.  However, increasing $V$ will increase the buffer level, thus increasing the time delay between downloading a segment and playing it back. Live streaming can be negatively affected by such behavior, but it may not be as much of an issue in on-demand video streaming. Besides, there is an upper cap for $V$ to prevent buffer overflow.

\section{Experimental Evaluation}

\subsection{Performance Metric}
\label{sec:experiment1}

To evaluate the overall performance of video streaming algorithms, we report the experimental results using a commonly-used notion of QoE that includes visual quality, quality oscillation, and rebuffering ratio. More formally, we have
\begin{equation}
\label{eq:qoe}
    \text{QoE}  =  \frac{1}{N} \sum_{n=1}^N u_n  - \alpha_1 \frac{1}{N-1} \sum_{n=1}^{N-1} | u_{n+1} - u_n |  - \alpha_2 \frac{1}{N} \sum_{n=1}^N \phi_n,
\end{equation}
where $u_n$ is the total utility of a segment summing up the download utility $u^d_n$ and the enhancement utility $u^e_n$. And $\phi_n$ is the rebuffering time for each segment. The first term in \cref{eq:qoe} represents the average visual quality, the second term represents the average quality oscillation, and the third term is the average rebuffering time per segment. This notion of QoE is widely used in prior work, e.g., MPC-based methods \cite{mpc, presr, situ, xatu} and RL-based methods \cite{pensieve, nas}.

Although ABR-only algorithms typically assume utility (visual quality) as a function of bitrate, it is not a suitable metric in NES systems. This is because bitrate only applies to compressed videos, not raw pixels after computational processing. Therefore, we choose the VMAF score \cite{vmaf} as the visual quality metric in our experiments, which is closer to human vision than other objective metrics like PSNR or SSIM. VMAF score ranges from 0 to 100, the higher, the better. We assign the highest-resolution video segments with a maximum score of 100 and invalid enhancement options with negative infinity scores. We measure the rebuffering time in milliseconds. And we set the trade-off factors $\alpha_1 = 1, \alpha_2 = 0.1$ as in method \cite{presr}, meaning that 1 QoE score is equivalent to the average visual quality of 1 VMAF score, average quality oscillation of 1 VMAF score, or per-segment rebuffering time of 10 milliseconds ($0.25\%$ rebuffering ratio in our case).

\subsection{Implementation Details}
\label{sec:experiment2}
We develop a unified simulation environment for both ABR and NES algorithms to efficiently examine their performance. Our simulator extends Sabre \cite{bola2}, which was used to evaluate \bola. Our simulation environment and all its algorithms are implemented using Python. And all deep-learning methods are implemented by PyTorch.
We use a 636-second 30-fps video ``Big Buck Bunny'' for streaming. The video is chunked into 4-second segments and encoded in 5 resolutions of 240p/360p/480p/720p/1080p with bitrate of 400/800/1200/2400/4800 Kbps.

\begin{table*}[h]
\centering
\caption{Details of network trace datasets. }
\label{tab:network}  
\begin{tabular}{|c|c|c|c|c|}
\hline
Dataset & 3G & 4G & FCC-SD & FCC-HD \\
\hline
Mean Bandwidth (Kbps) & 1184 & 31431 & 6081 & 17127  \\
Standard Variance of Bandwidth (Kbps) & 818 & 14058 & 11615 & 4018  \\
Number of Traces & 86 & 40 & 1000 & 1000  \\
\hline
\end{tabular}
\end{table*}

We evaluate algorithms using four network trace datasets from Sabre, including 3G traces \cite{3g}, 4G traces \cite{4g}, and two subsets of FCC traces \cite{fcc}. These datasets are widely used, and different studies have adopted different subsets \cite{nas, presr, mpc}. We exclude those 3G traces with an average bandwidth lower than 400 Kbps, the lowest option on our bitrate ladder. We present numerical details of our testing datasets in \cref{tab:network}.

\subsection{Enhancement Settings}
\label{sec:setting}

\begin{table}[ht]
\centering
\caption{Enhancement model details. From left to right: \#layers, \#channels, upscale factor, model size (KB). }
\label{tab:enhance}
\begin{tabular}{|c|c|c|c|c|c|}
\hline
Model & Quality &  240p & 360p & 480p & 720p \\
\hline
\multirow{4}[0]{*}{NAS-MDSR} & low &20,9,4,43 & 20,8,3,36 & 20,4,2,12 & 6,2,1,2 \\
 & medium & 20,21,4,203 & 20,18,3,157 & 20,9,2,37 & 6,7,1,5 \\
 & high & 20,32,4,461 & 20,9,3,395 & 20,18,2,128 & 6,16,1,17 \\
 & ultra & 20,48,4,1026 & 20,42,3,819 & 20,26,2,259 & 6,26,1,41 \\
\hline
\multirow{4}[0]{*}{IMDN} & low & 3,6,4,34 & 3,6,3,29 & 3,6,2,26 & - \\
 & medium & 5,12,4,111 & 5,12,3,103 & 5,12,2,96 & - \\
 & high & 6,32,4,760 & 6,32,3,736 & 6,32,2,719 & - \\
 & ultra & 6,64,4,2824 & 6,64,3,2777 & 6,64,2,2743 & - \\
\hline
\end{tabular}
\end{table}

We utilize five enhancement settings (six, if ``no enhancement'' is counted) to demonstrate that \bones can manage any enhancement method with any amount of computational power. 
We first adopt the pervasively-used enhancement method NAS-MDSR \cite{nas, nemo, stream360}, a content-aware Super-Resolution (SR) model that can overfit one individual video and upscale any resolution to 1080p. We further assume that the client-side computational hardware is an Nvidia GTX 1080ti GPU card. Under this setting, there are five enhancement quality levels (no enhancement, low, medium, high, and ultra) for each of the four low-resolution download options (240p, 360p, 480p, and 720p). More details about the deep-learning model can be found in \cref{tab:enhance}.  Besides, the visual quality and computation speed of each enhancement option are illustrated in \cref{fig:heatmap}, where zero computation speed implies the enhancement is not applicable. Generally, as the download bitrate and the enhancement quality level increase, the visual quality will increase and the computation speed will decrease. However, our algorithm does not rely on this being true.

\begin{figure}[ht]
    \centering
    \begin{subfigure}{0.48\textwidth}
        \centering
        \includegraphics[width=0.9\linewidth]{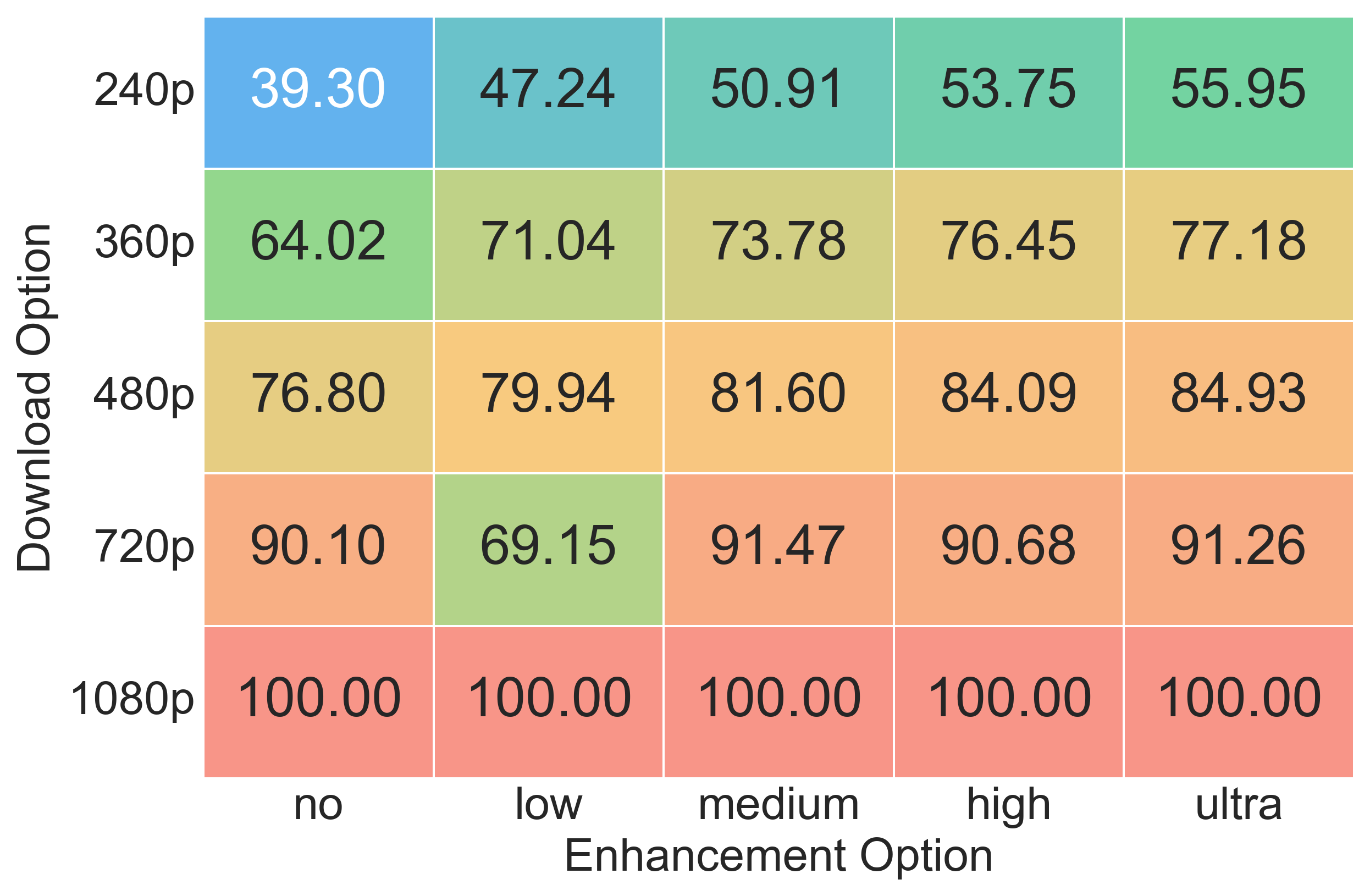}
        \caption{Visual quality (VMAF).}
    \end{subfigure}
    \begin{subfigure}{0.48\textwidth}
        \centering
        \includegraphics[width=0.9\linewidth]{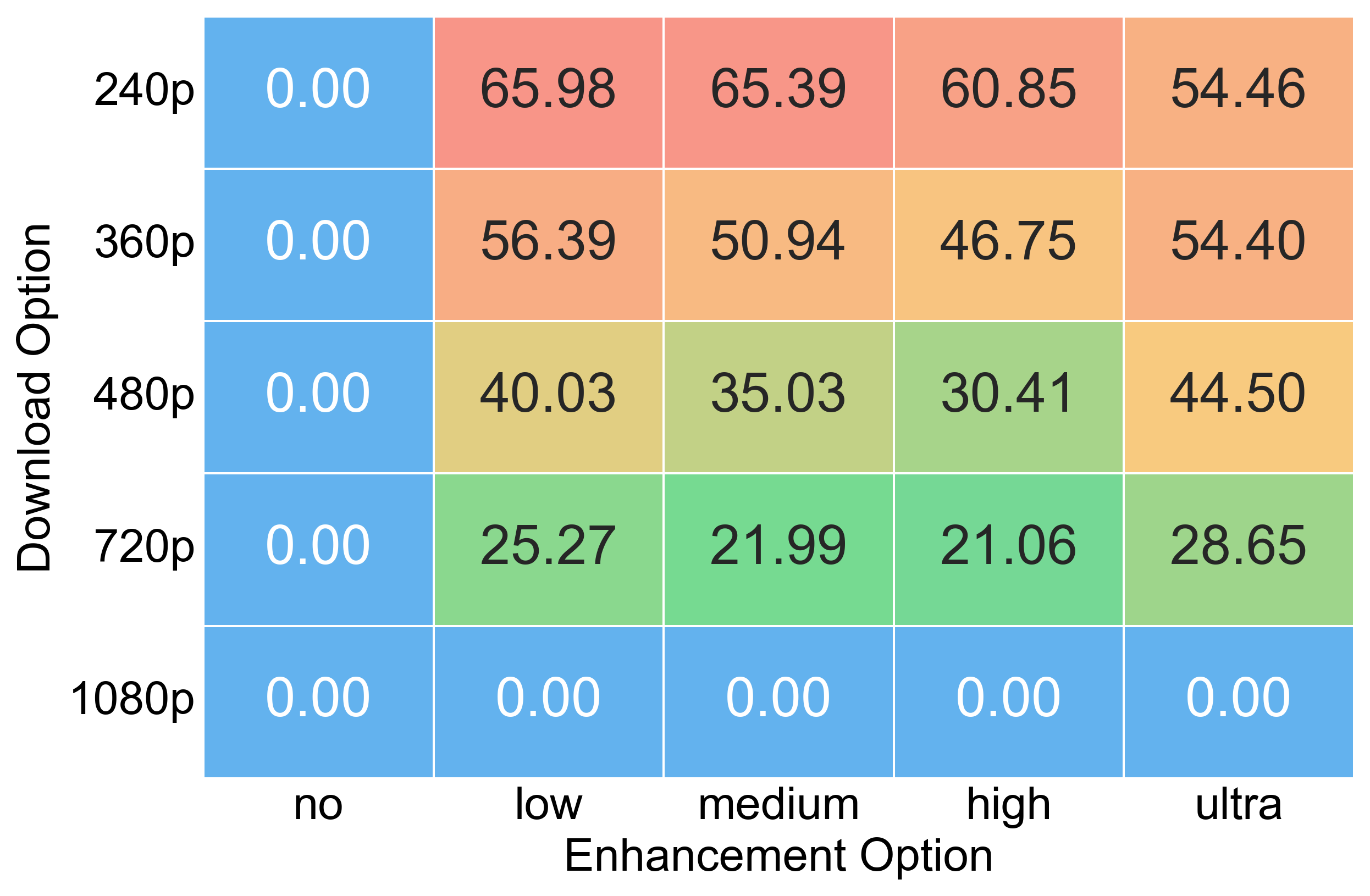}
        \caption{Computation speed (fps).}
    \end{subfigure}
    \caption{Enhancement performance under NAS-MDSR setting.}
    \label{fig:heatmap}
\end{figure}

Beyond NAS-MDSR, we adopt IMDN \cite{imdn} as another SR enhancement method. We offer 4 quality levels for 240p, 360p, and 480p videos to upsample them to 1080p. The low and medium quality levels are based on the IMDN-RTC structure, while the high and ultra levels are based on the vanilla IMDN. Model details of IMDN can also be found in \cref{tab:enhance}. 
As for the training technique, we explore both content-agnostic IMDN and content-aware IMDN. A content-agnostic IMDN is trained over a generic video dataset DIV2K \cite{div2k}. Such a generic model can enhance any video but bring less quality improvement. In comparison, content-aware IMDN is trained over our target video to overfit its content. This strategy offers better enhancement performance but introduces overhead in training cost and startup latency. A more detailed discussion exists in \cref{sec:overhead}.
The computational hardware for IMDN is either GTX 2080ti or GTX 3060ti GPU card. Altogether, we create 4 enhancement settings for IMDN by traversing all combinations of training techniques and computational hardware.

\subsection{Comparison Algorithms}
\label{sec:benchmark}
We compare \bones with the following ABR-only algorithms:
\begin{enumerate}
    \item \bola \cite{bola} makes download decisions by solving a Lyapunov optimization problem only related to the buffer level. 
    \item Dynamic \cite{bola2} switches between \bola and a throughput-based algorithm based on carefully-designed heuristic rules.
    \item FastMPC \cite{mpc} formulates bitrate adaptation as an MPC optimization problem and makes download decisions according to a pre-computed solution table.
    \item Buffer-based method \cite{buffer} chooses bitrate using a piecewise linear function of buffer level.
    \item Throughput-based method \cite{festive} chooses the maximum bitrate under the estimated network bandwidth.
    \item Pensieve \cite{pensieve} uses deep RL to make download decisions. We train the RL agent on 10,000 randomly selected FCC traces \cite{fcc} using the Asynchronous Advantage Actor-Critic (A3C) algorithm with identical training settings as in the original paper.
\end{enumerate}

We further augment ABR-only algorithms with a greedy enhancement strategy, denoted by the symbol $*$ on the right of the algorithm names. Specifically, we let the ABR algorithms make download decisions while simultaneously choosing the enhancement option that brings the most utility improvement in real-time. To avoid interrupting playback, enhancements will only be applied if it can be finished on time. This greedy enhancement strategy is simple and will not affect the download performance of ABR algorithms. However, it only leads to sub-optimal performance because the download decision maker and the enhancement decision maker share no knowledge with each other. 
Beyond ABR algorithms, we also compare \bones with the following NES algorithms:

\begin{enumerate}
    \item NAS \cite{nas} controls both download and enhancement processes using an RL agent. The NAS RL agent is designed and trained similarly to Pensieve's, with the additional consideration of a greedy real-time enhancement strategy. We did not reproduce the scalable model download approach in NAS, which can be viewed as providing more enhancement options.
    \item PreSR \cite{presr} solves an MPC problem online with the heuristic to pre-fetch and enhance only ``complex'' segments of the video.
    
\end{enumerate}

\section{Simulation Results}

\subsection{Overall Performance}
\label{sec:result1}

\begin{table*}[ht]
\centering

\caption{Overall performance comparison under different enhancement settings. Evaluated metrics include visual quality (VMAF), quality oscillation (VMAF), rebuffering rate (\%), and composite QoE. * denotes the ABR method is augmented with greedy enhancement. \bones achieves the highest QoE under all applicable settings. The optimal results are marked in \textbf{bold}. }
\label{tab:result}

\begin{tabular}{|c|c|c|c|c|c|c|c|c|c|}
\hline

Method & Qual.  & Osc.  & Rebuf. & QoE & Method & Qual.  & Osc. & Rebuf.  & QoE \\
\hline

\multicolumn{5}{|c|}{No Enhancement} & \multicolumn{5}{|c|}{NAS-MDSR, Content-Aware, GTX 1080ti} \\
\hline

\bola  & 82.90 & 4.05 & 2.21 & 69.98 & \bola* & 85.45 & 3.66 & 2.21 & 72.93 \\
Dynamic & 84.62 & 3.73 & 2.43 & \textbf{71.14} & Dynamic* & 86.62 & 3.43 & 2.43 & 73.44 \\
FastMPC & 87.90 & \textbf{3.33} & 4.22 & 67.65 & FastMPC* &  89.70 & \textbf{3.22}  & 4.22 & 69.56 \\
Buffer &  85.27 & 4.10  & 2.93 & 69.44 & Buffer* & 87.47 & 3.66 & 2.93 & 72.07  \\
Throughput &  80.10 & 3.59 & \textbf{1.93} & 68.77 & Throughput* & 83.43 & 3.33 & \textbf{1.93} & 72.36 \\
Pensieve & \textbf{89.23} & 4.01 & 4.17 & 68.50 & Pensieve* & \textbf{91.16} & 3.47 & 4.17 & 70.96 \\
NAS  & - & - & - & -  & NAS  & 89.20 & 4.40 & 2.92 & 73.08 \\
PreSR & - & - & - & -   & PreSR & 89.24 & 4.17 & 4.65 & 66.45  \\
\bones & - & - & - & -    & \bones  & 89.23 & 3.07 & 2.52 & \textbf{76.05} \\

\hline
\multicolumn{5}{|c|}{IMDN, Content-Agnostic, GTX 2080ti} & \multicolumn{5}{|c|}{IMDN, Content-Aware, GTX 2080ti} \\
\hline
\bola* & 83.99 & 3.92 & 2.21 & 71.20 & \bola* & 84.84 & 3.76 & 2.21 & 72.22 \\
Dynamic* & 85.62 & 3.63 & 2.43 & 72.25 & Dynamic* & 86.11 & 3.49 & 2.43 & 72.87 \\
FastMPC* & 88.46 & \textbf{3.27} & 4.22 & 68.26 & FastMPC* & 88.88 & \textbf{3.33} & 4.22 & 68.63 \\
Buffer* & 86.09 & 3.97 & 2.93 & 70.39 & Buffer* & 86.83 & 3.78 & 2.93 & 71.32  \\
Throughput  & 81.76 & 3.45 & \textbf{1.93} & 70.57  & Throughput* & 82.59 & 3.35 & \textbf{1.93} & 71.50 \\
Pensieve* & \textbf{89.90} & 3.79 & 4.17 & 69.39  & Pensieve* & \textbf{90.19} & 3.71 & 4.17 & 69.76 \\
NAS   & 88.18 & 4.66 & 2.92 & 71.81 & NAS & 88.35 & 4.57 & 2.92 & 72.06 \\
PreSR  & 88.65 & 4.39 & 4.57 & 65.95 & PreSR  & 88.79 & 4.20 & 4.51 & 66.53\\
\bones & 88.40 & 3.52 & 2.80 & \textbf{73.64} & \bones & 88.97 & 3.47 & 2.55 & \textbf{75.29} \\

\hline
 \multicolumn{5}{|c|}{IMDN, Content-Agnostic, GTX 3060ti} & \multicolumn{5}{|c|}{IMDN, Content-Aware, GTX 3060ti} \\
\hline
\bola* & 84.85 & 3.77 & 2.21 & 72.21  & \bola* & 84.75 & 3.83 & 2.21 & 72.05 \\
Dynamic* & 86.39 & 3.49 & 2.43 & 73.14 & Dynamic* & 86.49 & 3.42 & 2.43 & 73.32 \\
FastMPC* & 89.12 & \textbf{3.15} & 4.22 & 69.06 & FastMPC* & 89.17 & \textbf{3.21} & 4.22 & 69.03 \\
Buffer* & 86.91 & 3.80 & 2.93 & 71.38 & Buffer* & 86.77 & 3.85 & 2.93 & 71.19  \\
Throughput* & 82.58 & 3.36 & \textbf{1.93} & 71.48 & Throughput* & 83.46 & 3.34 & \textbf{1.93} & 72.38 \\
Pensieve* & \textbf{90.50} & 3.60 & 4.17 & 70.18 & Pensieve* & \textbf{91.05} & 3.50 & 4.17 & 70.83 \\
NAS   & 88.98 & 4.39 & 2.92 & 72.88 & NAS & 89.17 & 4.38 & 2.92 & 73.08 \\
PreSR  & 89.18 & 4.11 & 4.63 & 66.53 & PreSR  & 88.85 & 4.20 & 4.52 & 66.55 \\
\bones  & 88.49 & 3.37 & 2.64 & \textbf{74.52}   & \bones  & 89.38 & 3.12 & 2.47 & \textbf{76.36} \\

\hline
\end{tabular}
\end{table*}

\begin{figure}[t]
  \centering
    \includegraphics[width=\textwidth]{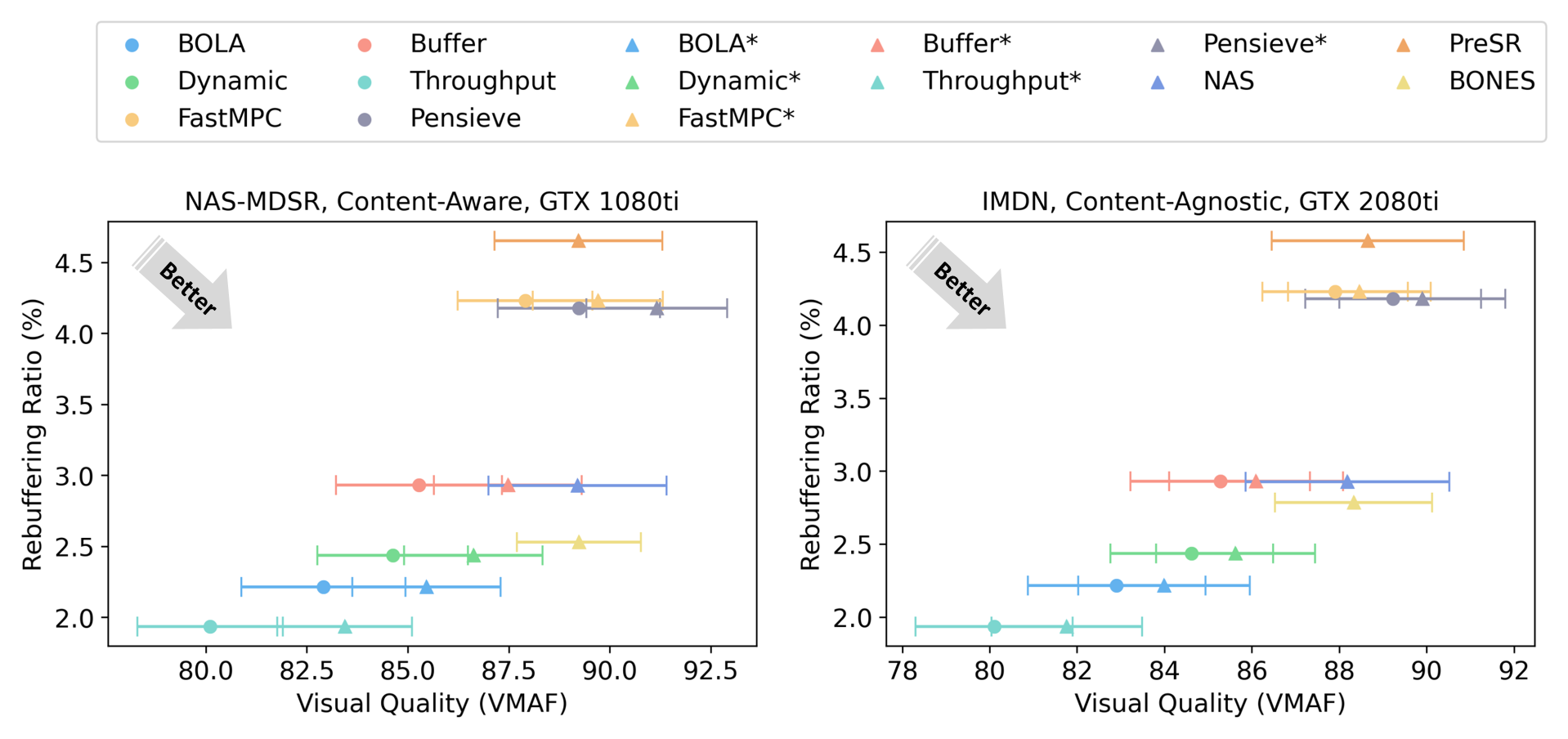}
   \caption{Performance comparison under two enhancement settings. The error bar represents the average visual quality oscillation. 
   Higher quality, lower oscillation, and lower rebuffering ratio are better.
   * denotes the ABR method is augmented with a greedy enhancement strategy.}
   \label{fig:nasall}
\end{figure}

We present the simulation results of \bones and other benchmark methods averaged across all network trace datasets in \cref{tab:result} and \cref{fig:nasall}. In \cref{tab:result}, we report the QoE score defined in \cref{eq:qoe} as well as its three components (visual quality, quality oscillation, rebuffering ratio).
The results are grouped under six settings, one setting without enhancement, and five with different enhancement settings, one with NAS-MDSR, and four with IMDN as discussed in \cref{sec:setting}. We provided visual representations of the three QoE components for each method under two enhancement settings earlier in \cref{fig:nasall}.
Among ABR-only algorithms, we find that Dynamic, the default ABR algorithm of \emph{dash.js} video player ~\cite{dashjs}, reaches the best balance between the three metrics and achieves the highest QoE. Yet \bones can still improve its QoE by up to $7.33\%$, which is equivalent to increasing the average visual quality by 5.22 VMAF score or decreasing the rebuffering ratio by $1.30 \%$.

By augmenting ABR-only algorithms with the greedy enhancement strategy, their QoE benefits from the additional usage of computational resources. The greedy enhancement strategy will not affect the download behavior of an ABR algorithm, thus keeping its rebuffering ratio unchanged. But by upgrading the low-quality downloaded segments, neural enhancement increases the visual quality and reduces quality oscillation, leading to an average of $2.81\%$ QoE improvement across all kinds of augmentations.  However, this enhancement strategy is decoupled from the download decision and thus leads to sub-optimal performance. In contrast, \bones makes joint decisions and outperforms all the augmented ABR algorithms.

As for NES methods, we find that both NAS and PreSR perform well under high-bandwidth scenarios but poorly if the network condition is weak. In~\cref{sec:result2}, we further scrutinize the robustness of different algorithms under different network trace datasets. Our experimental results suggest that it's hard for NAS to adapt to data distribution far from the training set. Similarly, \cite{situ} reports that RL-based methods have limited generalization ability and may not adapt to heavy-tailed real-world network traces. In comparison, \bones is a control-theoretic algorithm that does not require any learning process, making it easy and robust to deploy under versatile scenarios.

We note that the unsatisfactory performance issue of PreSR comes from its reliance on sub-optimal heuristics. PreSR formulates an NP-hard MPC optimization problem and solves it heuristically online. As a result, PreSR can perform even worse than its backbone method FastMPC, as it only explores limited solutions in the decision space. In contrast, our method \bones is guaranteed to outperform its backbone \bola, which is theoretically shown in \cref{sec:control} and empirically verified here.

By studying the four variants of IMDN enhancement settings, we find that the content-aware model performs better than the content-agnostic one as it overfits the content of a specific video. Nevertheless, content-aware enhancement requires costly training toward an individual video and increases the startup latency due to model downloading before video streaming. We postpone the detailed overhead analysis to \cref{sec:overhead}.  Besides, we find that the performance of neural enhancement increases with better computational hardware (GTX 3060ti than GTX 2080ti), which is intuitive since more computational resources enable more powerful enhancements.

In conclusion, our method \bones consistently outperforms existing ABR and NES methods by $3.56\%$ to $13.20\%$ in QoE averaged across all network conditions and all enhancement settings.

\subsection{Sensitivity to Network Condition}
\label{sec:result2}

\begin{figure}[t]
    \centering
    \includegraphics[width=\textwidth]{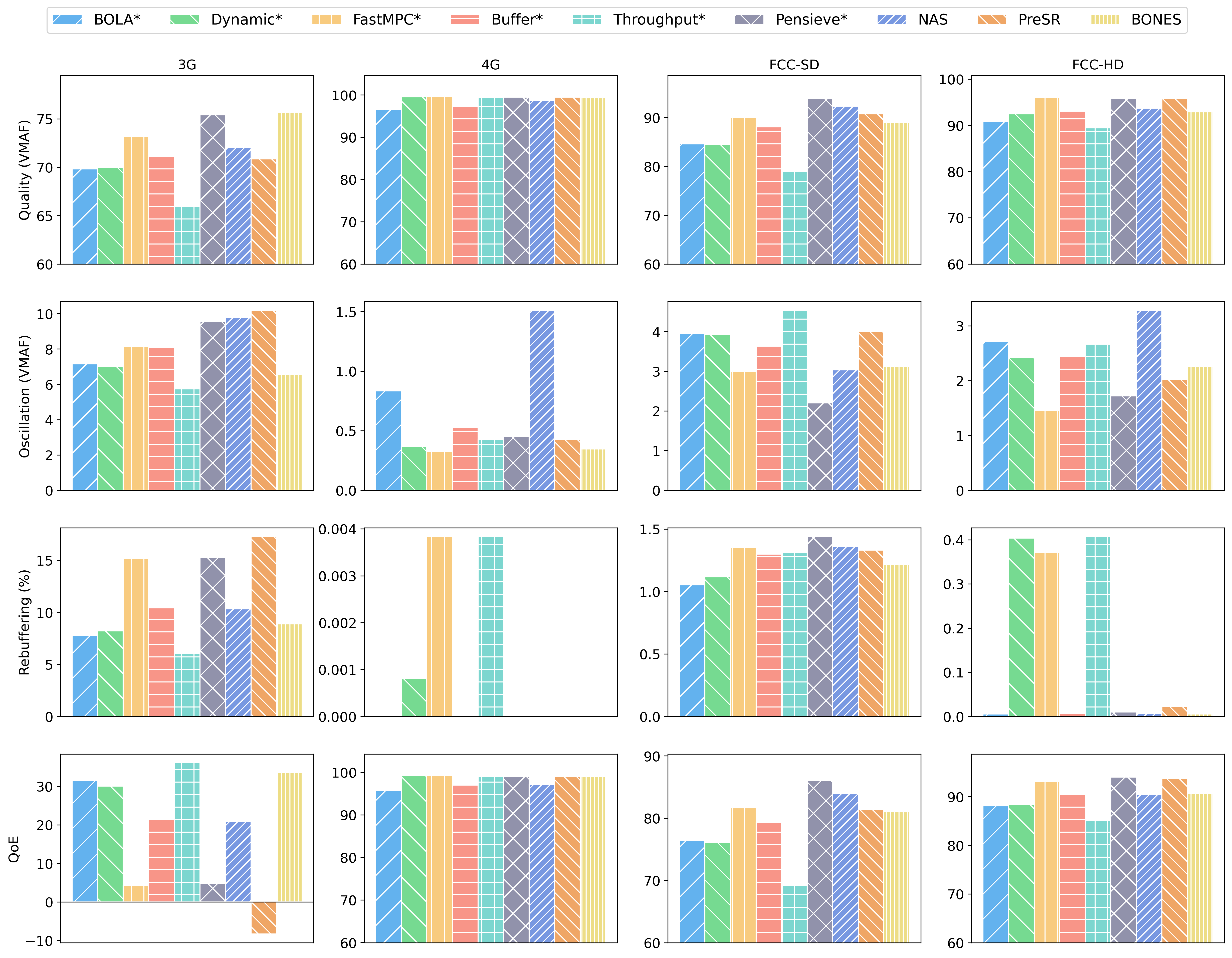}
    \label{fig:dataset}
   \caption{Performance comparison on different network trace datasets. \bones consistently deliver high QoE across diverse network conditions, including challenging ones.}
\end{figure}

To investigate the sensitivity of different algorithms to the network conditions, we further break down the results under the NAS-MDSR enhancement setting in~\cref{tab:result}  across four different network traces and report them in~\cref{fig:dataset}. Based on the results, we find that the performance of video streaming algorithms varies drastically with network conditions. For example, the throughput-based algorithm outperforms others on the low-bandwidth 3G dataset due to its conservative request of bitrate. However, this behavior does not generalize well to high-bandwidth conditions. Especially the throughput-based method performs worst on the FCC-SD dataset because the high variance of bandwidth leads to a misprediction of throughput. On the other hand, FastMPC, Pensieve, and PreSR perform well under good network conditions via aggressive bitrate requests. But they incur severe rebuffering events on 3G datasets. Unlike other methods, \bones behaves robustly across all network conditions, maintaining a stable QoE even under challenging low-bandwidth scenarios.

\subsection{Practical Improvement}
\label{sec:ablation}

In this subsection, we introduce two additional heuristics to further advance the performance of \bones and report their empirical improvement.

\textit{Download Process Monitoring. } Since the network bandwidth can vary abruptly in practice, a decision may become out-of-date when the corresponding video segment is under download. 
For example, downloading a high-quality segment may incur rebuffering if the bandwidth drops in midway. In such a case, it is wise to abort the current download and select a lower-quality segment. We implement this in a similar way to \bola, allowing our algorithm to monitor the bandwidth variations during download processes and regret its previous decision if necessary. During the download process of a segment, \bones periodically recomputes the current decision's objective score via a modified version of \cref{eq:alg}, where $S_i(t_k)$ is replaced with the remaining size to download. \bones also computes the objective scores of other options. If a better option is found, \bones will abandon the ongoing download and switch to the new solution. 

\textit{Automatic Hyper-Parameter Tuning. } We observe that the hyper-parameter setting of \bones is sensitive to the network condition. Under poor network conditions, it is advisable to adopt a conservative strategy with lower quality and a lower rebuffering ratio, while under good conditions, an aggressive strategy is preferred. Inspired by Oboe \cite{oboe}, we develop an automatic hyper-parameter tuning mechanism for \bones in accordance with different network conditions. In the offline phase, we generate synthetic network traces with different average bandwidths (from 200 to 5000 Kbps), different bandwidth variances (500 to 5000 Kbps), and different latency (20 to 100 ms). Then we search for the optimal hyper-parameter $\beta$ (from 0.1 to 1) and $\gamma p$ (from $10$ to $100$) under each network condition that maximizes the QoE. During the online deployment phase, \bones estimates the network condition using the exponentially weighted moving average before each decision interval. \bones then adjusts its hyper-parameters to the offline optimal accordingly.

\textit{Empirical Results.}
We conduct an ablation study for practical improvement techniques and present the results under the NAS-MDSR enhancement setting in \cref{tab:ablation}. Note that those versions without automatic hyper-parameter tuning are assigned with the optimal hyper-parameters found by grid search ($\beta=1$, $\gamma p = 10$). As experimental results imply, Download Process Monitoring can reduce the rebuffering ratio by regretting previous decisions during bandwidth valley. While Automatic Hyper-Parameter Tuning can contribute to the performance of \bones mostly in visual quality. 
When both techniques are applied, \bones could achieve the maximum QoE improvement of $1.48\%$. In summary, this ablation study verifies the effectiveness of our practical improvement techniques.

\begin{table}[t]
\centering

\caption{Ablation study for practical improvements. }
\label{tab:ablation}

\begin{tabular}{|c|c|c|c|c|}
\hline
Method & Qual. (VMAF) & Osc. (VMAF) & Rebuf. (\%) & QoE \\
\hline
Basic & 88.64 & 3.17 & 2.62 & 74.95  \\
Monitor & 88.76 & \textbf{3.03} & \textbf{2.52} & 75.62 \\
AutoTune & 89.22 & 3.16 & 2.58 & 75.71 \\
Monitor+AutoTune & \textbf{89.23} & 3.07 & \textbf{2.52} & \textbf{76.05} \\

\hline
\end{tabular}

\end{table}

\subsection{Sensitivity to Hyper-Parameters }
\label{sec:hyper}

\begin{figure}[ht]
    \centering
    \includegraphics[width=\linewidth]{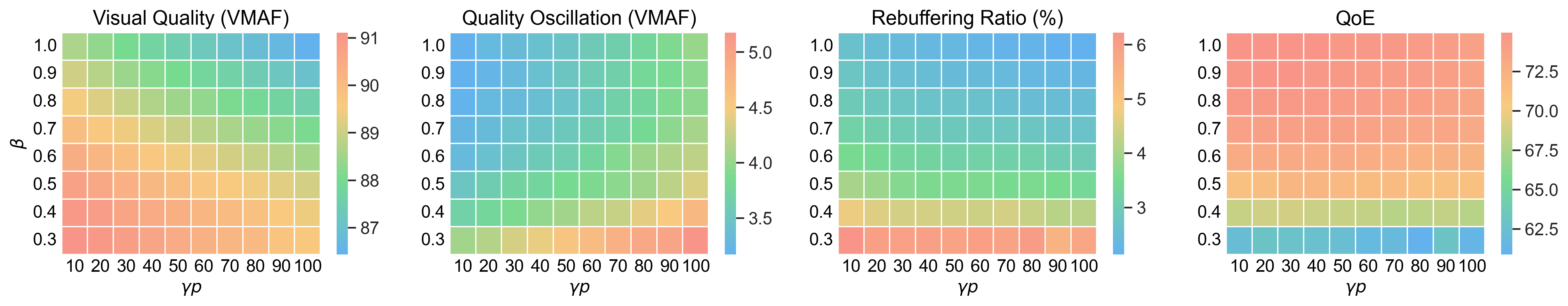}
    \caption{Performance of \bones with respect to hyper-parameter settings.}
    \label{fig:hyper}
\end{figure}

In this section, we comprehensively investigate the effect of hyper-parameters. It is worth noting that, in practice, one may leverage the Automatic Hyper-Parameter Tuning scheme in \cref{sec:ablation} instead of manually adjusting parameters.
From the experimental results under NAS-MDSR enhancement settings in \cref{fig:hyper}, we find that increasing $\gamma$ generally leads to lower visual quality, higher quality oscillation, and a lower rebuffering ratio. As explained in \cref{sec:plane}, a higher $\gamma$ makes \bones download more low-bitrate segments, trading quality for playback smoothness. Besides, due to the diminishing return of visual quality, low-bitrate segments span wider in quality scores. Thus downloading more low-quality segments brings more oscillations.
We also find that a higher $\beta$ leads to lower visual quality, lower quality oscillation, and a lower rebuffering ratio. From our theoretical analysis in~\cref{sec:plane} we know that, by increasing the weight on the penalty term of optimization objective \cref{eq:alg}, \bones downloads more low-bitrate segments and performs more enhancements, resulting in lower quality and fewer rebuffering events. Additionally, since neural enhancement can improve visual quality and reduce the visual quality disparities between segments, increasing $\beta$ also alleviates quality oscillations. Lastly, empirical results verify that higher $\beta$ (so does $V$) positively contributes to the QoE. This conclusion roughly aligns with the observation in \cref{the:perf}, despite the QoE here being defined in a slightly different way than the optimization objective.

\subsection{Overhead Analysis}
\label{sec:overhead}

In this section, we comprehensively analyze the overhead of \bones, including offline costs of training enhancement models, online costs of downloading them during playback, and other costs.

\begin{figure}[t]
  \centering
    \includegraphics[width=0.9\linewidth]{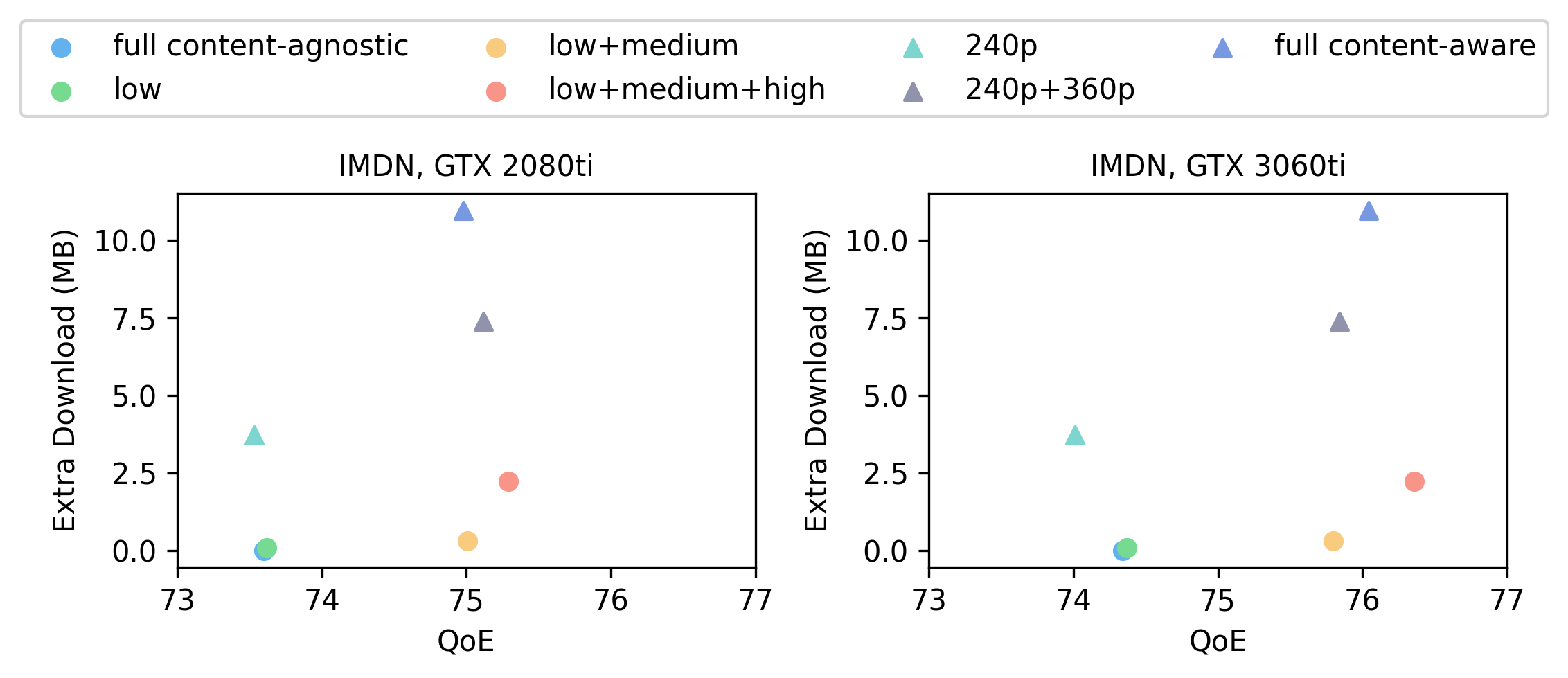}
   \caption{Performance of \bones with respect to the additional download size.} 
   \label{fig:overhead}
\end{figure}

\textit{Offline Costs.} 
A content-agnostic model is trained over a generic video dataset and enhances all video content. So it only incurs a one-time training cost without the need for fine-tuning.
In contrast, a content-aware model needs to be fine-tuned for a particular video using a content-agnostic backbone.
NAS \cite{nas} reports that a typical fine-tuning will take 10 minutes for each video. The enhancement benefit will cover this additional cost after streaming the video for 30 hours. As a rule of thumb, one may apply content-aware enhancements only for popular videos.

\textit{Online Costs.}
Online costs of \bones include the additional bandwidth consumption and startup latency in downloading extra models.
The content-agnostic enhancement has all models pre-installed on the client device.
So this cost is only applicable to content-aware enhancement as it requires specific models per video.
Unlike other methods, \bones allow choosing different enhancement methods from a pool of enhancement methods on the fly. This behavior leads to more QoE improvement but requires pre-downloading more models, which may further increase the online cost. Therefore, we study the trade-off between performance and extra download size in \cref{fig:overhead}.

We plot QoE of content-agnostic IMDN and the additional download overhead (0 MB) in \cref{fig:overhead}. We do the same for a full content-aware IMDN, which requires downloading 4 quality levels $\times$ 3 bitrate levels $=12$ models. We then start to remove models from the enhancement method pool. For example, \emph{low} indicates only keeping low-quality models, and \emph{240p} indicates only keeping enhancement models for 240p. From the results, we find that low-quality content-aware models can provide slightly more QoE than content-agnostic models with a download overhead of 89 KB. And we can gain almost all the benefits by downloading the first two quality levels with only 310 KB. Surprisingly, we can outperform the full model by downloading the first three quality levels using 2215 KB. This implies the ultra-level IMDN model is dragging the performance of \bones by occupying enormous resources but providing marginal improvements.

Assume the downloading processes for models and segments occur sequentially, then downloading models before playback will cause extra startup latency.
To mitigate this, we further introduce a \textit{QuickStart} heuristic that postpones the model downloading and skips enhancement tasks until the buffer level reaches a certain threshold. To avoid rebuffering, we empirically set the threshold to be two segments (8 seconds in our case). We quantitatively analyze the trade-off between performance and overhead in \cref{tab:overhead} using IMDN with its low and medium options on a 3060ti GPU. The content-agnostic enhancement provides the lowest QoE improvement with zero overhead. The content-aware enhancement has the best performance but incurs fine-tuning costs, 310-KB extra download, and 0.68-second additional startup latency. QuickStart sacrifices the possibility of enhancing the first few segments. As a result, it offers an intermediate performance but eliminates the startup latency.

\textit{Other Costs.}
Content-aware enhancement introduces negligible costs of delivering enhancement model addresses in the metadata. Enhancement tasks can slow down the main playback process if they consume excessive computational resources. However, \bones will not favor enhancement under limited computational resources due to the long computation time. Thus, the computational overhead is also not a concern. In summary, \bones significantly improves QoE with minimal overhead, while providing various trade-offs between performance and overhead.

\begin{table}[]
\centering

\caption{Trade-off between performance, offline costs, and online costs. Offline costs include whether training or fine-tuning is required. Online costs include extra download size (KB) and additional startup latency (s).}
\label{tab:overhead}

\begin{tabular}{|c|c|c|c|c|c|c|c|c|c|}
\hline
Method & Qual. & Osc. & Rebuf & QoE & Train & FineTune & Download & Startup \\
\hline
Agnostic & 88.49 & 3.37 & 2.64 & 74.52 & Yes & \textbf{No} & \textbf{0} & \textbf{0} \\
Aware & \textbf{89.02} & \textbf{3.09} & \textbf{2.53} & \textbf{75.80} & Yes & Yes & 310 & 0.68 \\
Aware+QuickStart & 88.80 & 3.14 & 2.56 & 75.40 & Yes & Yes & 310 & \textbf{0}  \\

\hline
\end{tabular}

\end{table}
\section{Prototype System}
\label{sec:prototype}

\subsection{System Overview}

\begin{figure}[]
    \centering
    \includegraphics[width=\linewidth]{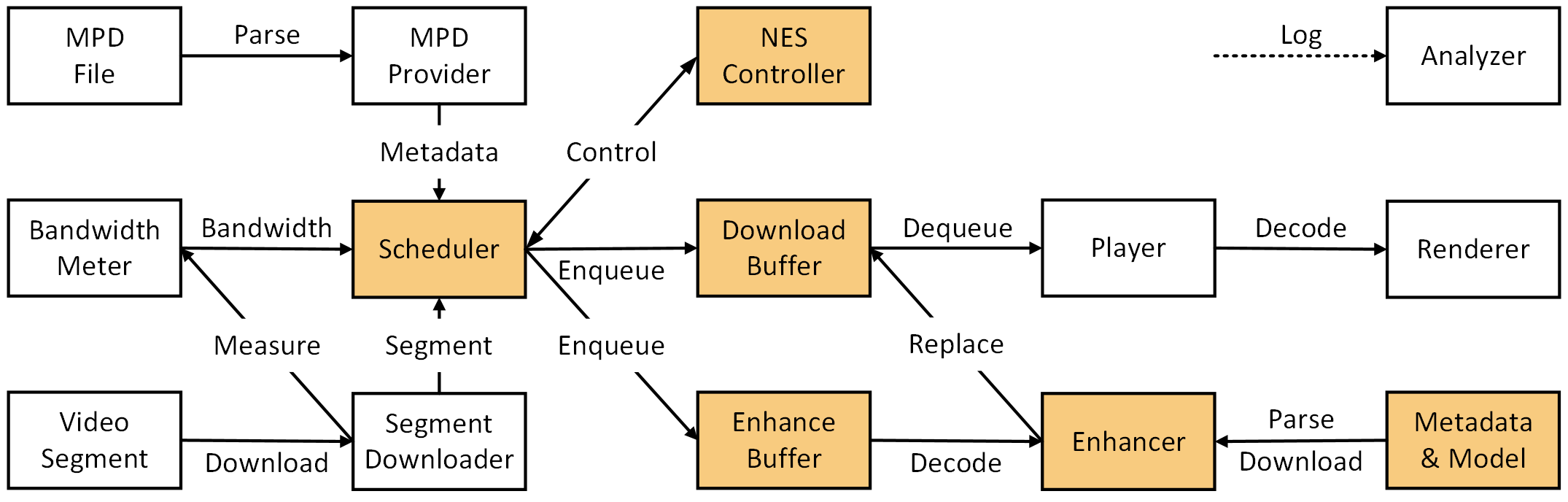}
    \caption{Prototype system architecture. Modules largely different from the ABR pipeline are highlighted.}
    \label{fig:prototype}
\end{figure}

\begin{figure}[]
    \centering
    \includegraphics[width=\linewidth]{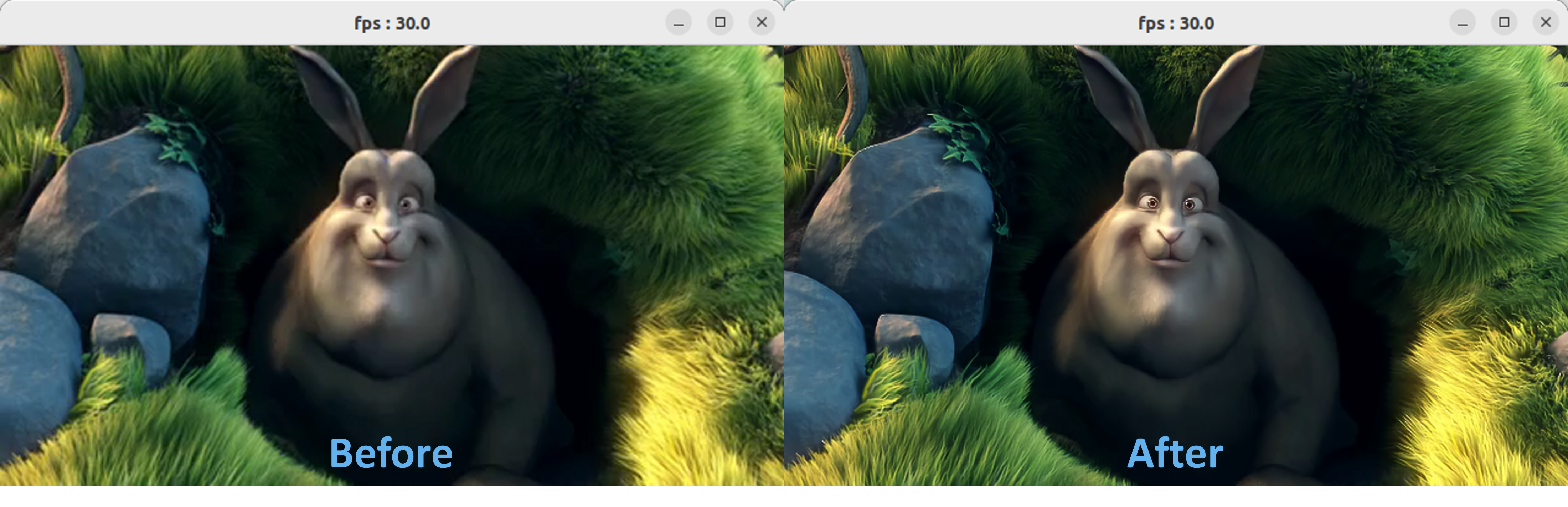}
    \caption{Sample rendering outcome of the prototype system before and after the enhancement enabled by \bones. Compared with the original video frame, the enhanced one features clearer textures and sharper edges, especially for the eyes, grass, and rocks. It is recommended to zoom in for a closer look at the details. The frame rate is displayed in the title portion of the video frame.}
    \label{fig:screenshot}
\end{figure}

To further validate the \bones algorithm in practice, we develop a prototype video player in addition to our simulator. Our prototype implementation is based on the iStream framework~\cite{istream}, which enables the flexible composition of video player modules in a publisher-listener fashion. The prototype system architecture is presented in \cref{fig:prototype}. We highlight the modules significantly different from the traditional ABR pipeline, which we implemented from scratch. Besides, we implemented a graphical renderer for visual comparison, other than the headless player provided by iStream. We implemented the entire system in Python, the deep learning model in PyTorch, the video decoder in VPF \cite{vpf}, and the graphical renderer in OpenGL. Our modification involves approximately 3000 lines of code. In \cref{fig:screenshot}, we illustrate sample frames rendered by the prototype system.

As shown in \cref{fig:prototype}, the system's core component is the \textit{Scheduler}. It first calls the \textit{MPD Provider} to download and parse the metadata. It then interacts with the \textit{NES Controller} to make download and enhancement decisions. Next, it notifies the \textit{Segment Downloader} to download the corresponding segment, and the \textit{Bandwidth Meter} to measure the bandwidth during download. After downloading the video segment, it enqueues the segment into the \textit{Download Buffer} and the \textit{Enhance Buffer}. Meanwhile, the enhancement models will be downloaded if needed. The \textit{Enhancer} fetches the downloaded video segment from the \textit{Enhance Buffer}, decodes, and enhances it. The enhanced segment replaces its counterpart in the \textit{Download Buffer} after enhancement. The \textit{Player} fetches the segment from the \textit{Download Buffer}, decodes it if necessary, and gives it to the \textit{Renderer} for display. The \textit{Analyzer} collects logs from all components all the time.

\subsection{Settings and Experiments}

\begin{table*}[]
\centering

\caption{Performance comparison in the prototype system. \bones achieves the highest QoE.}
\label{tab:prototype}

\begin{tabular}{|c|c|c|c|c|c|c|c|c|c|}
\hline

Method & Qual.  & Osc.  & Rebuf. & QoE & Method & Qual.  & Osc. & Rebuf.  & QoE \\
\hline
\multicolumn{5}{|c|}{Content-Agnostic} & \multicolumn{5}{|c|}{Content-Aware with Quick Start} \\
\hline

\bola*  & 70.44 & \textbf{3.37} & \textbf{0.19} & 66.29 &  \bola* & 70.31 & \textbf{3.47} & \textbf{0.17} & 66.14\\
Dynamic* & 73.23 & 4.22 & 1.60 & 62.57 & Dynamic* & 73.45 & 4.16 & 1.58 & 62.93 \\
Buffer* & 71.24 & 5.24 & 0.27 & 64.89 & Buffer* & 71.35 & 5.24 & 0.17 & 65.41 \\
Throughput* & 69.92 & 3.51 & 1.60 & 59.98 & Throughput*  & 70.15 & 3.68 & 1.58 & 60.10 \\
NAS & 71.65 & 6.92 & 1.45 & 58.92 & NAS & 70.95 & 7.72 & 0.24 & 62.26 \\
\bones & \textbf{75.42} & 3.92 & 0.52 &\textbf{69.38} & \bones & \textbf{76.56} & 3.48 & \textbf{0.17} &\textbf{72.38} \\

\hline
\end{tabular}
\end{table*}

\begin{figure}[]
    \centering
    \includegraphics[width=\linewidth]{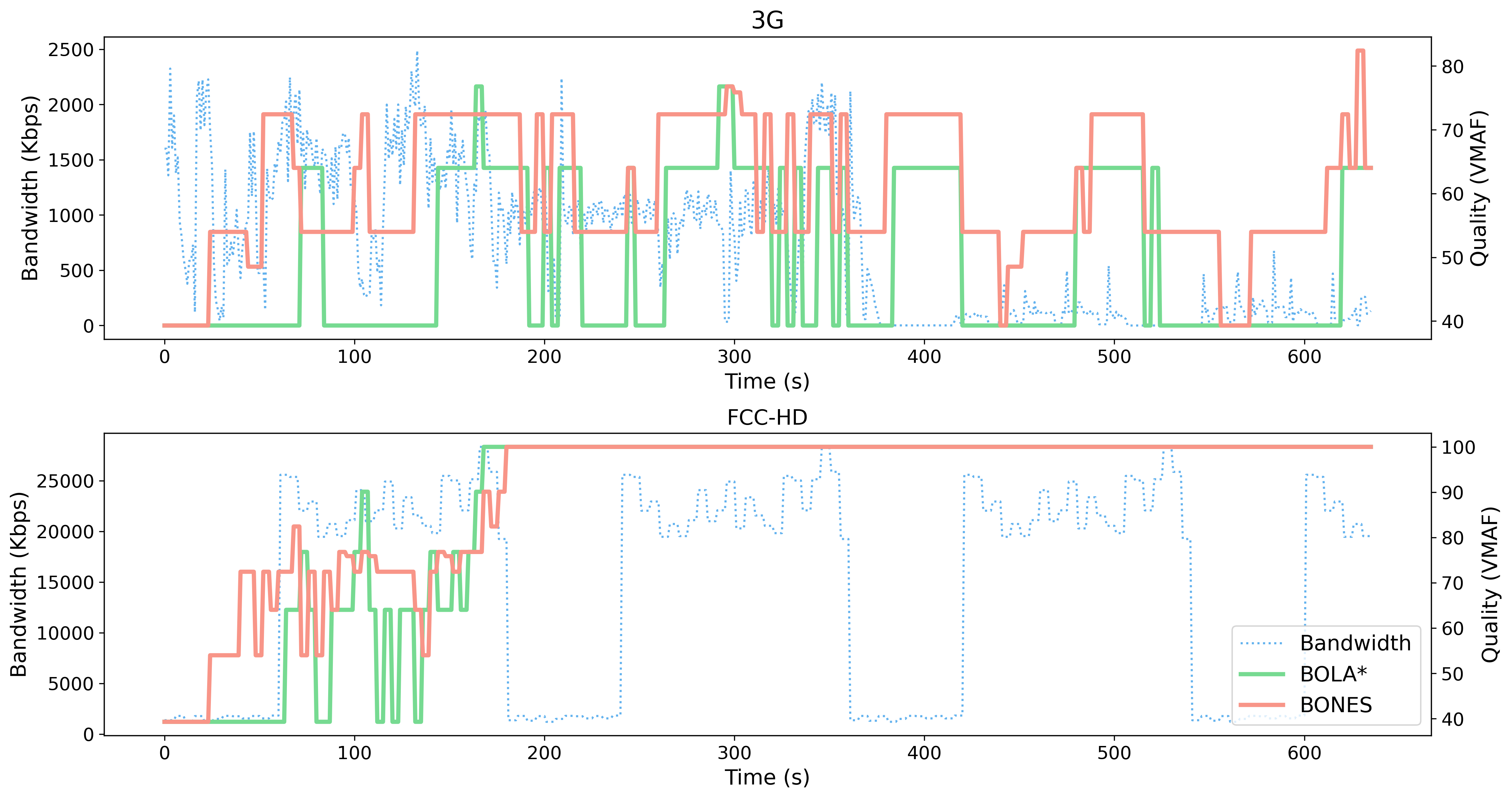}
    \caption{Performance evolution in the prototype system over example traces. }
    \label{fig:istream}
\end{figure}

In our prototype implementation, the enhancement quality ladder is pre-measured and stored locally. The computation speed table is measured the first time this system is deployed. In adaptation to the varying computational resources, we consistently update this table using the latest measurement of enhancement speed. Outdated enhancement tasks will be immediately aborted.

We implement two variants of IMDN enhancer, one content-agnostic and one content-aware with the QuickStart heuristic introduced in \ref{sec:overhead}. We compare our \bones algorithm with the well-performing algorithms in simulation, namely, BOLA*, Dynamic*, Buffer*, Throughput*, and NAS. For a fair comparison, we test the algorithms over 10 pre-recorded network traces sampled from the low-bandwidth 3G network trace dataset and the high-bandwidth FCC-HD dataset. 

We present the overall experimental results in \cref{tab:prototype}. The trend slightly differs from the simulation due to practical factors like the choice of dataset, sampling bias, varying resources, system latency, and implementation details. Under the content-agnostic setting, we observe that \bones outpeforms other baselines in QoE by $4.66\%$ to $13.54\%$. While under the content-aware setting, the advantage of \bones over baselines ranges from $9.43\%$ to $20.43\%$. We also illustrate the performance of \bones and the best baseline BOLA* over two example traces in \cref{fig:istream}, including the bandwidth variation over time and the video quality determined by the algorithms. We find from the figure that \bones generally achieves better quality than the baseline. In conclusion, the advantage of \bones not only holds in large-scale simulations but also translates to practice.

\section{Related Work}

\textbf{Adaptive Bitrate Streaming.}
The ABR algorithm selects the download bitrate for video segments in adaptation to the varying network bandwidth. ABR algorithms can be generally classified into three categories.
First, \textit{buffer-based methods} adapt the bitrate purely based on the client's buffer level. For example, the method in \cite{buffer} chooses the bitrate according to a piecewise linear function of the buffer level. \bola \cite{bola} makes download decisions by solving a Lyapunov optimization problem with respect to the buffer level. Our method could be viewed as an extension of \bola from ABR to the new NES scenario.
Second, \textit{throughput-based methods} utilize the prediction of future network bandwidth. FESTIVE \cite{festive} estimates bandwidth using harmonic mean and chooses the maximum bitrate under that estimation. Fugu \cite{situ} proposes a more accurate download time prediction module based on fully-connected neural networks, while Xatu \cite{situ} employs a long short-term memory (LSTM) network for prediction.
Third, \textit{mixed-input algorithms} consider both buffer level and estimated throughput inputs. Pensieve \cite{pensieve} takes these states as input and trains a reinforcement-learning (RL) agent. FastMPC \cite{mpc} formulates bitrate adaptation as an online control problem and solves it via model predictive control (MPC). Dynamic \cite{bola2} switches between a throughput-based method and \bola to take advantage of both.

\textbf{Neural Enhancement.}
Neural enhancement refers to any method that improves video quality via deep learning. It typically encompasses the following areas.
Super-resolution (SR) aims at improving the resolution of images \cite{mdsr, imdn, swinir} and videos \cite{basicvsr, basicvsr++}. To reduce the training and inference cost of SR models, Li et al. \cite{meta} overfit the first video segment and then fine-tune the model toward the following segments. Wang et al. \cite{revisit} introduce heuristics like training with smaller patches and decreasing the update frequency. DeepStream \cite{deepstream} further introduces scene grouping, frame sub-sampling, and model compression. 
Frame interpolation aims at improving the frame rate of videos \cite{spacetime, unconstrain}. Inpainting methods can complete the missing region of images \cite{lama} or videos \cite{videoinpaint}. Denoising methods effectively reduce noise in videos \cite{videnn, fastdvdnet}. Apart from 2D videos, point cloud upsampling \cite{punet, pugan} increases the density of 3D point cloud objects in volumetric videos, and point cloud completion \cite{pfnet} makes them whole. In this paper, we used NAS-MDSR \cite{nas} and IMDN \cite{imdn} SR models as our enhancement methods, but our control algorithm \bones can be integrated with any subset of the above algorithms.

\textbf{Neural-Enhanced Streaming}
NES incorporates neural enhancement methods into video streaming algorithms and enables high-quality content delivery by leveraging both bandwidth and computational resources. 
NAS \cite{nas} is the first method to integrate SR models into on-demand video streaming. It develops a content-aware SR model called NAS-MDSR and utilizes a Pensieve-like RL agent to manage both download and enhancement processes. SRAVS \cite{improve} adopts RL controllers and a lightweight SR model while proposing a double-buffer system model. PreSR \cite{presr} pre-fetches and enhances ``complex'' segments that bring the most quality improvement and bandwidth reduction. It formulates the problem as MPC and solves it heuristically. Our method is most similar to these approaches.

Neural enhancement can be applied beyond the scope of client-side on-demand video streaming. LiveNAS \cite{livenas} upsamples low-resolution videos at the ingest server for live streaming. NEMO \cite{nemo} increases the computational efficiency of SR by only enhancing selected "anchor" frames. Based on  LiveNAS and NEMO, NeuroScaler \cite{neuroscaler} can enhance live streams at scale. Dejavu \cite{dejavu} improves the quality of the current frame in live video conferencing using historical knowledge. And VISCA \cite{srcache} deploys SR models on edge-based cache servers.
Beyond the regular 2D videos, recent work applied NES to 360-degree and volumetric videos. The substantial bandwidth requirement of these videos makes the assistance of neural enhancement even more appealing. SR360 \cite{sr360} uses RL to make viewport prediction and enhancement decisions for delivering 360 videos. Sophon \cite{sophon} pre-fetches and enhances semantically salient tiles in a 360 video. Madarasingha et al. \cite{edge} enhance 360 videos with both SR and frame interpolation at edge servers. YuZu \cite{yuzu} streams volumetric videos and enhances them using point cloud upsampling. Emerging studies of virtual reality streaming systems also pursue optimal policies for joint allocation of the available computational and communication resources\cite{ChakareskiKRB:21, GuptaCP:20, ChakareskiKY:20}, where the concept of enhancement is not limited to deep-learning methods.

While the above methods are based on RL or heuristic optimization, we formulate and near-optimally solve a Lyapunov optimization problem. Our method \bones is the first NES algorithm with a theoretical performance bound. Furthermore, \bones has been proven to exhibit exceptional performance through numerous experiments, while also possessing the benefits of being simple and robust in deployment.
\section{Conclusion}

This paper proposed \bones, an NES algorithm incorporating neural enhancement into video streaming, allowing users to download low-quality video segments and then enhance them via deep-learning models. Residing in a parallel-buffer system model, \bones jointly optimizes download and enhancement decisions to maximize the QoE by solving a Lyapunov optimization problem. \bones achieves the performance within an additive factor toward offline optimal, making it the first NES algorithm with a theoretical performance guarantee. Experiments verify that \bones can outperform existing ABR and NES methods by $3.56\%$ to $13.20\%$ in simulation and $4.66\%$ to $20.43\%$ in practice. \bones maintains stable performance under all network conditions, has explainable hyper-parameters, and offers different trade-offs between performance and overhead. We publicly release our source code, expecting \bones to become a new baseline for the NES problem.

There are some limitations to the current work, which open future research directions. Firstly, we assume the computation time of enhancement is static or can be accurately predicted. This may not hold in reality especially when other applications are competing for computational resources. Secondly, we only consider scheduling one type of enhancement (super-resolution). In practice, it is possible to simultaneously apply multiple enhancements, which may require the scheduling of multiple enhancement buffers. Thirdly, we only consider a client-side on-demand video streaming scenario. Deploying a streaming algorithm in the cloud or for live streaming may raise other interesting challenges. Given the simplicity of algorithm design, bounded theoretical performance, and superior empirical performance of \bones, in future work, we will explore the possibility of generalizing \bones to more dynamic, multi-buffer, multi-point, and low-latency systems.

\begin{acks}
The work has been supported by the National Science Foundation (NSF) under awards 
CNS-1763617, CNS-1901137, CNS-2106463, 
CNS-2102963, CAREER-2045641, CNS-2106299,
CCF-2031881, ECCS-2032387, CNS-2040088, CNS-2032033, and CNS-2106150; by the National Institutes of Health (NIH) under award R01EY030470; and by the Panasonic Chair of Sustainability at the New Jersey Institute for Technology.
\end{acks}

\bibliographystyle{ACM-Reference-Format}
\bibliography{sample-base}

\newpage
\appendix
\appendix

\section{Proof of Theorem 1}
\label{app:proof1}

\begin{proof}

We prove the inequality $Q^d(t_k) \leq V \frac{u_{\max} + \gamma p}{p} + p$ by induction. The bound holds for $k=1$ as $Q^d(t_1) = Q^d(0) = 0$. Suppose it also holds for some $k$. Then there are two cases for $k+1$.

(1) $Q^d(t_k) \leq V \frac{u_{\max} + \gamma p}{p}$. From \cref{eq:Qd} we know $Q^d(t_k)$ can increase by at most $p$ in one single time slot. Hence, we have $Q^d(t_{k+1}) \leq V \frac{u_{\max} + \gamma p}{p} + p$.

(2) $V \frac{u_{\max} + \gamma p}{p} < Q^d(t_k) \leq V \frac{u_{\max} + \gamma p}{p} + p$. In this case, we have 
$$V < \frac{Q^d(t_k) p}{u_{\max} + \gamma p} \leq \frac{Q^d(t_k) p + Q^e(t_k) t^e(i, j)}{u^d(i, t_k) + \Tilde{u}^e(i, j, t_k) + \gamma p}, ~\forall i,j.$$
This makes $O_D(t_k) + V O_P(t_k) > 0, ~\forall \mathbf{d}, \mathbf{e},$ in \cref{eq:alg} and forces the control algorithm not to download. As a result, $Q^d(t_{k+1}) < Q^d(t_k)$. 

Till now, we have proven  $Q^d(t_k) \leq V \frac{u_{\max} + \gamma p}{p} + p$. Combining it with $V \leq \frac{(Q^d_{\max} - p) p}{u_{\max} + \gamma p}$ from assumption, we have $Q_d(t_k) \leq Q^d_{\max}$ and the proof is complete.
\end{proof}

\section{Proof of Theorem 2}
\label{app:proof2}

\begin{lemma}
\label{le:stat}
Denote the optimal objective of problem Eq.~\eqref{eq:opt} as $\Bar{u}^{opt} + \gamma \Bar{s}^{opt}$.
There exists a stationary i.i.d. algorithm for this problem that has the following properties for any $\delta > 0$.
\begin{enumerate}
    \item $\Bar{u}^* + \gamma \Bar{s}^* \leq \Bar{u}^{opt} + \gamma \Bar{s}^{opt} + \delta.$
    \item  $\frac{\mathbb{E} \{ \sum_{ki} d^*_i p \}}{\mathbb{E} \{ \sum_k T^*_k \}} \leq 1 + \delta.$
    \item $\frac{\mathbb{E} \{ \sum_{kij} d^*_i e^*_j t^e(i,j) \}}{\mathbb{E} \{ \sum_k T^*_k \}} \leq 1 + \delta.$
\end{enumerate}
\end{lemma}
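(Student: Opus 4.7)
The plan is to follow the standard existence-of-stationary-policy argument from the Lyapunov optimization framework~\cite{renewal, book}, specialized to our two-buffer renewal system. I take the only exogenous per-slot randomness to be the bandwidth $\omega(t_k)$, assumed i.i.d.\ across slots from a fixed distribution, and define a stationary i.i.d.\ algorithm as one whose decision $(\mathbf{d}(t_k), \mathbf{e}(t_k))$ is a possibly randomized function of $\omega(t_k)$ alone, independent of the buffer occupancies or of the past history.

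First, I would consider the set $\mathcal{R}$ of achievable per-slot expectation vectors of the form $(R_u, R_s, R_d, R_e, R_T)$, whose coordinates correspond to the numerator of the utility in \cref{eq:utility}, the numerator of the smoothness in \cref{eq:smooth}, the download-buffer inflow $\sum_i d_i p$, the enhancement-buffer inflow $\sum_{ij} d_i e_j t^e(i,j)$, and the slot duration $T_k$, respectively. Using the boundedness assumptions ($t^e \leq t^e_{\max}$, $T_{\min} \leq T_k \leq T_{\max}$, bounded utility) and Carath\'eodory's theorem applied pointwise in the bandwidth realization, I would argue that $\mathcal{R}$ is compact and equal to the closure of the set of per-slot expectations achievable by stationary randomized policies; consequently, any vector achievable in the long run by an arbitrary (even non-causal, non-stationary) policy lies in $\mathcal{R}$.

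Second, I would apply this to the optimal policy for \cref{eq:opt}: its long-run averages yield a point in $\mathcal{R}$ with $R_u^{opt}/R_T^{opt} = \bar{u}^{opt}$, $R_s^{opt}/R_T^{opt} = \bar{s}^{opt}$, together with $R_d^{opt} \leq R_T^{opt}$ and $R_e^{opt} \leq R_T^{opt}$ inherited from the rate-stability constraints \cref{eq:rated,eq:ratee}. Given $\delta > 0$, I would then pick a stationary i.i.d.\ policy whose per-slot expectation vector is sufficiently close coordinate-wise to the optimal one; because $T_k \geq T_{\min} > 0$ uniformly, the renewal-reward theorem identifies $\bar{u}^* = R_u^*/R_T^*$ and $\bar{s}^* = R_s^*/R_T^*$, and continuity of division at $R_T^* > 0$ converts coordinate-wise closeness into properties (1), (2), and (3) simultaneously.

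The main obstacle will be handling the renewal (variable-slot-length) structure: the objectives in \cref{eq:utility,eq:smooth} are ratios of expectations rather than plain time-averages, so approximating numerator and denominator separately must still yield a clean approximation of the ratio, and the $\delta$ budget must be scaled by the Lipschitz constant of $(x,y)\mapsto x/y$ on the relevant compact region. The uniform bounds $T_{\min}$, $T_{\max}$, and $t^e_{\max}$ listed in the hypotheses of \cref{the:perf} are exactly what make this continuity argument go through. A secondary subtlety is that the enhancement-deadline penalty in \cref{eq:ue} is defined pathwise via buffer levels, so a purely buffer-oblivious stationary policy cannot enforce it deterministically; this is precisely why the lemma is stated under the rate-stability relaxation, which a buffer-agnostic stationary policy can still satisfy in expectation.
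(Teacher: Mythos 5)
Your proposal is essentially correct, but it takes a much more self-contained route than the paper: the paper's entire proof of \cref{le:stat} is a two-sentence observation that problem \eqref{eq:opt} is a maximization of a time-averaged objective under time-averaged constraints over variable-size renewal frames, followed by a direct appeal to Lemma 1 of \cite{renewal}. What you have written is, in effect, a reconstruction of the proof of that cited lemma specialized to this system --- the achievable-expectation-vector set $\mathcal{R}$, compactness and convexity via Carath\'eodory applied per bandwidth realization, the claim that any policy's long-run averages lie in $\mathcal{R}$, and the renewal-reward/continuity-of-division step that converts coordinate-wise $\delta$-closeness of $(R_u, R_s, R_d, R_e, R_T)$ into properties (1)--(3). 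The trade-off is clear: the paper's citation is economical but silently assumes the problem fits the template (i.i.d.\ exogenous randomness, uniformly bounded frame lengths and rewards, ratio-of-expectations objective), whereas your version makes explicit exactly which hypotheses ($T_{\min} > 0$, $T_{\max}$, $t^e_{\max}$, bounded utility) carry the argument, and it surfaces a genuine subtlety the paper glosses over: the deadline penalty in \cref{eq:ue} is defined pathwise through buffer levels, so a buffer-oblivious stationary policy cannot enforce it sample-path-wise, and the lemma only makes sense under the rate-stability relaxation in \cref{eq:rated,eq:ratee} --- which is precisely how the paper uses it in the proof of \cref{the:perf}. If you were to write this out in full, the one step deserving the most care is the claim that arbitrary (non-stationary, history-dependent) policies cannot escape $\mathcal{R}$; that requires the i.i.d.\ assumption on $\omega(t_k)$ and a limiting/averaging argument over frames, and is the technical heart of the cited Lemma 1. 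You would also want to handle the no-download branch, where the slot length is the fixed wait $\Delta$ rather than $S_i(t_k)/\omega(t_k)$, so that $T_k$ remains uniformly bounded away from zero in every case.
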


\begin{proof}
    Problem Eq.~\eqref{eq:opt} is formulated as maximizing time-averaged objective under time-averaged constraints in a system with variable-size renewal frames. This lemma can be directly derived from Lemma 1 in \cite{renewal}.
\end{proof}

Based on \cref{le:stat}, we prove \cref{the:perf} as follows.

\begin{proof}
\label{prf:bones}

In the following proof, we will use superscript $'$ to denote the decisions of \bones, and $*$ for those of the optimal i.i.d. algorithm.

Following the framework of Lyapunov optimization over renewable frames \cite{renewal}, we define the Lyapunov function $L(t_k)$ as

\begin{equation}
    L(t_k) \triangleq \frac{1}{2} (Q^d(t_k)^2 + Q^e(t_k)^2).
\end{equation}

Define $\mathbf{Q}(t_k) = [Q^d(t_k), Q^e(t_k)]$, then the conditional Lyapunov drift is formulated as

\begin{equation}
    D(t_k) \triangleq \mathbb{E} \{ L(t_{k+1}) - L(t_k) | \mathbf{Q}(t_k)\}.
\end{equation}

The drift $D(t_k)$ is bounded by

\begin{equation}
\label{eq:driftbound} 
\begin{split}
    D(t_k) & \leq  \frac{1}{2} p^2 + \frac{1}{2} (t^e_{\max})^2 + T_{\max} \mathbb{E} \{ T_k |  \mathbf{Q}(t_k) \}  \\
    & + Q^d(t_k) \mathbb{E} \{ \sum_i d_i p - T_k |  \mathbf{Q}(t_k) \} \\
    & + Q^e(t_k) \mathbb{E} \{ \sum_{ij} d_i e_j t^e(i, j) - T_k |  \mathbf{Q}(t_k) \}.
\end{split}
\end{equation}

For the sake of simplicity, we denote $\frac{1}{2} p^2 + \frac{1}{2} (t^e_{\max})^2$ as $C$. Now we reuse $O_D(t_k)$ in \cref{eq:od} and $O_P(t_k)$ in \cref{eq:op}. Adding $V$ times $E\{O_P(t_k)|\mathbf{Q}(t_k)\}$ on both sides and simplifying the formula using $E\{O_D(t_k)|\mathbf{Q}(t_k)\}$, we have

\begin{equation}
\label{eq:orod}
\begin{split}
    & D(t_k) + V \mathbb{E} \{ O_P(t_k) |  \mathbf{Q}(t_k) \} \\
    & \leq C + ( T_{\max} - Q^d(t_k) - Q^e(t_k)) \mathbb{E} \{ T_k |  \mathbf{Q}(t_k) \} \\
    & + \mathbb{E} \{ O_D(t_k) + V O_P(t_k) |  \mathbf{Q}(t_k) \}.
\end{split}
\end{equation}

Recall that the time slot duration can be computed by $T_k = (\sum_i d_i S_i ) / \omega (t_k)$.
As our algorithm greedily minimizes the objective defined in \cref{eq:alg}, we can derive

\begin{equation}
\label{eq:comparestat}
\begin{split}
    & \frac{O_D'(t_k) + V O_P'(t_k)}{ \sum_i d'_i S_i } \leq \frac{O_D^*(t_k) + V O_P^*(t_k)}{ \sum_i d_i^* S_{i} }  \\
    \Rightarrow & \frac{O_D'(t_k) + V O_P'(t_k)}{(\sum_{i} d_{i}' S_{i} ) / \omega (t_k)} \leq
    \frac{ O_D^*(t_k) + V O_P^*(t_k)}{(\sum_{i} d_{i}^* S_{i} ) / \omega (t_k)}  \\
    \Rightarrow & \mathbb{E} \{ O_D'(t_k) + V O_P'(t_k) |  \mathbf{Q}(t_k) \}   \\
    &\leq  \frac{\mathbb{E} \{ T_k' | \mathbf{Q}(t_k) \}}{\mathbb{E} \{ T_k^* | \mathbf{Q}(t_k) \}} \mathbb{E} \{ O_D^*(t_k) + V O_P^*(t_k) |  \mathbf{Q}(t_k) \} . 
\end{split}
\end{equation}

We then apply \cref{eq:comparestat} to \cref{eq:orod} and bring in the definition of \cref{eq:utility}, \cref{eq:smooth}. Since the total video length or total enhancement time is less or equal to the total playback time, we have

\begin{equation}
\label{eq:ineqopt}
\begin{split}
    & D'(t_k) + V \mathbb{E} \{ O_P'(t_k) |  \mathbf{Q}(t_k) \}  \\
    &\leq C + T_{\max} \mathbb{E} \{ T_k' |  \mathbf{Q}(t_k) \} - V (\Bar{u}^* + \gamma \Bar{s}^*) \mathbb{E} \{ T'_k  |  \mathbf{Q}(t_k) \}   \\
     &+ Q^d(t_k) (\mathbb{E} \{ T'_k  |  \mathbf{Q}(t_k) \} (\frac{\mathbb{E} \{ \sum_i d^*_i p |  \mathbf{Q}(t_k) \}}{\mathbb{E} \{ T^*_k  |  \mathbf{Q}(t_k) \}}) - 1 )  \\
    &+ Q^e(t_k) (\mathbb{E} \{ T'_k  |  \mathbf{Q}(t_k) \} (\frac{\mathbb{E} \{ \sum_{ij} d^*_i e^*_j t^e(i,j) |  \mathbf{Q}(t_k) \}}{\mathbb{E} \{ T^*_k  |  \mathbf{Q}(t_k) \}}) - 1 )  \\
    &\leq C + T_{\max} \mathbb{E} \{ T_k' |  \mathbf{Q}(t_k) \} + V (\Bar{u}^* + \gamma \Bar{s}^*) \mathbb{E} \{ T'_k  |  \mathbf{Q}(t_k) \} . 
\end{split}
\end{equation}

Next, sum both sides over $k = 1, \cdots, K_N$ and we get

\begin{equation}
\label{eq:sumover}
\begin{split}
    &  \mathbb{E} \{ L(t_{k+1}) \} + V \mathbb{E} \{ \sum_k O_P'(t_k) \}\\
    & \leq K_N C + T_{\max} \mathbb{E} \{ \sum_k T_k' \} - V (\Bar{u}^* + \gamma \Bar{s}^*) \mathbb{E} \{ \sum_k T'_k  \}  . 
\end{split}
\end{equation}

We divide both sides by $\mathbb{E} \{ \sum_k T'_k  \}$ and take the limit $K_N \to \infty$. Since $L(k_{k+1}) \geq 0$, we have

\begin{equation}
- V (\Bar{u}' + \gamma \Bar{s}') \leq \lim_{K_N \to \infty} \frac{K_N C}{\mathbb{E} \{ \sum_k T_k' \}} + T_{\max} - V (\Bar{u}^* + \gamma \Bar{s}^*) . 
\end{equation}

Because the lower bound of time slot duration is $T_{\min}$, we can transform the above formula into \cref{eq:perf}.

\end{proof}

\end{document}